\newtheorem{theorem}{Theorem}
\newtheorem{lemma}{Lemma}
\newtheorem{proof}{Proof}
\newcommand{\Exp}[1]{\operatorname{E}\left[#1\right]}
\newcommand{\ExpO}{\operatorname{E}}
\newcommand{\Cov}[1]{\operatorname{Cov}\left[#1\right]}
\newcommand{\vecV}[1]{\operatorname{vec}\left(#1\right)}
\newcommand{\vecVh}[1]{\operatorname{vech}\left(#1\right)}
\renewcommand\mid{\,\vert\,}
\newcommand{\Tr}{\operatorname{Tr}}
\newcommand{\R}{\mathbb{R}}
\newcommand{\N}{\mathcal{N}}
\newcommand{\GP}{\mathcal{GP}}
\newcommand{\Transp}{\mathsf{T}}
\newcommand{\bmat}[1]{\begin{bmatrix}#1\end{bmatrix}}
\newcommand{\argmax}[2]{\underset{#1}{\mathrm{arg}\,\mathrm{max}}\,\,#2}
\newcommand{\myd}{\textrm{d}}      
\begin{document}

\begin{frontmatter}
\title{Stochastic quasi-Newton with line-search regularization}
\thanks[footnoteinfo]{This paper was not presented at any IFAC meeting. Corresponding author: A. Wills.}

\author[Newcastle]{Adrian Wills}\ead{Adrian.Wills@newcastle.edu.au}, 
\author[Uppsala]{Thomas B. Sch\"on}\ead{thomas.schon@it.uu.se}
\address[Newcastle]{University of Newcastle, School of Engineering, Callaghan, NSW 2308, Australia.}
\address[Uppsala]{Uppsala University, Department of Information Technology, 751 05 Uppsala, Sweden.}

\begin{keyword}                           
	Nonlinear System Identification, Stochastic Optimisation, Stochastic Gradient, Stochastic Quasi-Newton, Sequential Monte Carlo, Particle Filter, and Gaussian Process.
\end{keyword}                             

\begin{abstract}                          
  In this paper we present a novel quasi-Newton algorithm for use in
  stochastic optimisation. Quasi-Newton methods have had an enormous
  impact on deterministic optimisation problems because they afford rapid
  convergence and computationally attractive algorithms. In essence,
  this is achieved by learning the second-order (Hessian) information
  based on observing first-order gradients. We extend these ideas to
  the stochastic setting by employing a highly flexible model for the
  Hessian and infer its value based on observing noisy gradients. In
  addition, we propose a stochastic counterpart to standard
  line-search procedures and demonstrate the utility of this
  combination on maximum likelihood identification for general
  nonlinear state space models.
\end{abstract}

 \end{frontmatter}


\section{Introduction}
%
We are interested in the non-convex stochastic optimisation problem
\begin{align}
	\label{eq:5}
	\min_{x} f(x),
\end{align}
where we only have access to noisy evaluations of the cost function $f(x)$ and its first-order derivatives. The problem has a long history and an important landmark development was the so-called \textit{stochastic approximation} idea derived by Robbins and Monro almost 70 years ago \cite{RobbinsM:1951}. The central idea of stochastic approximation is to form a Markov-Chain for $x$ via
\begin{align}
  \label{eq:1}
  x_{k+1} &= x_k + \alpha_k p_k
\end{align}
that converges (under fairly mild conditions) to a local minimum of \eqref{eq:5} for careful choices of the \emph{search-direction} $p_k$ and \emph{step-length} $\alpha_k > 0$ (see e.g. \cite{BottouCN:2018}).

%
In recent years the relevance of this problem has massively increased mainly due to the fact that it arises in at least the following two important situations. First, when the cost function and its gradients are intractable, but where we can still make use of numerical methods to compute noisy estimates (preferably unbiased) of these objects. Second, when the cost function inherently depends on a very large amount of data and it becomes impractical to evaluate it or the associated gradient using the entire dataset. It is then standard to use only a smaller fraction of the data, which is commonly referred to as minibatching. This situation arises in large-scale application of supervised machine learning and in particular in deep learning.

%
The application that motivated us to develop the solution presented in this work is that of maximum likelihood identification of nonlinear state space models. This is a specific instance of the first situation mentioned above. The cost function---the likelihood---is intractable, but we have numerical methods---sequential Monte Carlo (SMC) \cite{Gordon:1993,Kitagawa:1993,StewartM:1992}---that provide unbiased estimates of the likelihood~\cite{Delmoral:2004,PittSGK:2012}. For two relatively recent overviews of the use of SMC---a.k.a. particle filters---in nonlinear system identification, see \cite{SchonLDWNSD:2015,Kantas:2015}.

%
Our \textit{main contribution} is a new stochastic optimisation algorithm which features mechanisms facilitating the use of second-order information (Hessian) in calculating the search-direction $p_k$, and a stochastic line search to  compute the step-length $\alpha_k$. The representation used for the Hessian is provided by the Gaussian process \cite{RasmussenW:2006} and we develop a method for its online updating as the optimisation algorithm progresses. We also derive a stochastic line-search procedure that employs a version of the Armijo condition \cite{Armijo:1966,Wolfe:1969,Wolfe:1971}. It is perhaps surprising that little work has been done when it comes to developing stochastic line-search methods \cite{BollapragadaMNST:2018}. We stress that while the developments mentioned above are generally applicable, the application that motivated us to undertake this work is the nonlinear system identification problem, which provides an important spin-off contribution.



\section{Related work}
\label{sec:rw}


Due to its importance, the stochastic optimisation problem is rather well studied by now. The first stochastic optimisation algorithm was introduced in \cite{RobbinsM:1951}. It makes use of first-order information only, motivating the name stochastic gradient (SG), which is the contemporary 
term \cite{BottouCN:2018} for these algorithms, originally referred to as stochastic approximation. Interestingly most SG algorithms are not descent methods since the stochastic nature of the update can easily produce a new iterate corresponding to an increase in the cost function. Instead, they are Markov chain methods in that their update rule defines a Markov chain. 


Since the landmark paper \cite{RobbinsM:1951}, many extensions and modifications have been developed within the statistics and automatic control communities. While the contributions are too many to enumerate here, some notable works include convergence results~\cite{kiefer1952stochastic,ljung1977analysis,ljung1978strong}, online parameter estimation and system identification in~\cite{Ljung:1979,Ljung:1983}, adaptive control strategies~\cite{goodwin1981discrete}, and general books in the area~\cite{bertsekas1996neuro,spall2005introduction,ljung2012stochastic}. It is important to note that a primary motivation for the current paper is the closely related area of system identification for nonlinear dynamic systems. 


This existing research is having an enormous impact in the related area of machine learning at present and we believe the reason is simple: it is used to solve almost all supervised machine learning problems, including all deep learning problems~\cite{BottouCN:2018}. This is evidenced by all available toolboxes in the area offering SG algorithms and variants of them. Due to this, SG methods are still receiving enormous research attention.


The primary focus of current research activity is directed towards producing algorithms with improved convergence rates. Two important aspects that impact convergence rate are:
\begin{itemize}
\item Poor problem scaling can lead to slow convergence~\cite{BottouCN:2018};
\item Classical step-length formulas are conservative~\cite{AsiD20129siam,MoulinesB:2011}.
\end{itemize}


Regarding the first point, a well-known drawback of first-order methods is that the choice of coordinate system can greatly impact the rate of convergence, which is highlighted in the concluding comments of a recent review paper on the topic~\cite{BottouCN:2018}. Incorportating second-order information (Hessian) can alleviate the sensitivity to coordinate choice, which is one of the motivations for the Newton's method and it's locally quadratic convergence rate. At the same time, in many pratical situations, computing the Hessian is impractical, which is certainly the case in many system identification problems and almost all deep learning problems. Addressing this problem are the infamous suite of quasi-Newton methods such as the BFGS method \cite{Broyden:1970,Fletcher:1970,Goldfarb:1970,Shanno:1970}, Broyden's method~\cite{Broyden:1965} and the DFP formula \cite{FletcherP:1963,Broyden:1967}. In essence, these algorithms learn the Hessian (or its inverse) matrix based on first-order information, resulting in fast convergence and computationally attractive algorithms. 


For the stochastic setting of the current paper, these classical quasi-Newton methods are not applicable~\cite{BottouCN:2018}. Towards addressing this, over the past decade we have witnessed increasing capabilities of so-called \emph{stochastic quasi-Newton methods}, the category to which our current developments belong.  The work by \cite{SchraudolphYG:2007} developed modifications of BFGS and its limited memory version. There has also been a series of papers approximating the inverse Hessian with a diagonal matrix, see e.g. \cite{BordesBG:2009} and \cite{DuchiEY:2011}. The idea of exploiting regularisation together with BFGS was successfully introduced by \cite{MokhtariR:2014}. Some of these approaches rely on the assumption that maintaining common random numbers between two gradient evaluations will result in locally deterministic behaviour. While this assumption may be satisfied for certain classes of functions, it is not valid of nonlinear system identification situation when SMC based calculation of the cost and gradient are employed~\cite{Kantas:2015}. 


In the current paper, we take a different approach and develop a new quasi-Newton algorithm that deals with the stochastic problem directly. The development stems from stating the integral form of the so-called quasi-Newton equation, where an exact relation between gradient differences and the Hessian are provided. This provides a natural point for treating the stochastic gradients and we then propose the use of a flexible model structure for the unknown Hessian matrix based on formulating the problem as a Gaussian Process regression problem. This results in a fully probabilistic model for the Hessian, where the mean function is used as a surrogate Hessian matrix in a new quasi-Newton algorithm.


Regarding the second point, the step-length schedule is also addressed in the current paper by suggesting a stochastic line-search procedure modelled after the backtracking line-search with Armijo conditions~\cite{Armijo:1966,Wolfe:1969,Wolfe:1971}. It is interesting---and perhaps somewhat surprising---to note that it is only very recently that stochastic line search algorithms have started to become available. One nice example is the approach proposed by \cite{MahsereciH:2017} which uses the framework of Gaussian processes and Bayesian optimisation. The step length is chosen that best satisfies a probabilistic measure combining reduction in the cost function with satisfaction of the Armijo condition. Conceptually more similar to our procedure is the line search proposed by \cite{BollapragadaMNST:2018}, which is tailored for problems that are using sampled mini-batches, as is common practice within deep learning. The final line-search algorithm proposed in the current paper begins with a more classical backtracking style procedure and converges towards a deterministic schedule that satisfies the typical convergence requirements for SG methods.

While we started this development in our earlier conference paper \cite{WillsS:2017}, that paper missed several key ingredients that we provide in this paper. In particular we have now developed a correct and properly working mechanism for representing and updating the local Hessian using a GP. Furthermore, we have introduced a completely new and improved line search algorithm.


\section{Stochastic quasi-Newton method}
Quasi-Newton methods have found enormous success within the field of optimisation~\cite{NocedalW:2006}. The primary reasons are that they capture the cost function curvature information and that they are computationally inexpensive, requiring only gradient calculations. The curvature information leads to better scaling of the negative gradient direction and consequently to faster convergence~\cite{NocedalW:2006}. 

In pursuit of similar advantages for the stochastic optimisation
problem, many authors have considered how to develop quasi-Newton
algorithms when the gradient vector is stochastic (see e.g. the
concluding remarks in \cite{BottouCN:2018}). While the potential
benefits of incorporating curvature information are widely recognised,
it remains a rapidly evolving area.

In order to describe our quasi-Newton approach in
Section~\ref{sec:LocalHessian}, here we provide some background
material with a slightly non-standard introduction to the quasi-Newton
method (Section~\ref{sec:SQN_intro}). This leads to the so-called
\textit{quasi-Newton integral} from which the more classic
\textit{secant equation} can be obtained. The secant equation is
essential to all standard quasi-Newton methods such as BFGS, DFP and
SR1 methods, and is obtained by introducing a rather strong
approximation. We refrain from making this approximation and in
Section~\ref{sec:SQN_stochastic} we show that we can formulate a
stochastic version of the quasi-Newton integral.
\subsection{A non-standard quasi-Newton introduction}
\label{sec:SQN_intro}
%

%
The idea underlying the Newton and quasi-Newton methods is to
\emph{learn a local quadratic surrogate model} $q(x_k, \delta)$ of the
cost function $f(x)$ around the current iterate $x_k$
\begin{align}
  \label{eq:22}
  q(x_k,\delta) \triangleq f(x_k) + \delta^{\Transp}\nabla f(x_k) +
  \frac{1}{2} \delta^{\Transp} \nabla^2 f(x_k) \delta,
\end{align}
where $\delta = x - x_k$. Note that~\eqref{eq:22} is a second-order
Taylor expansion of $f(x)$, i.e.  $f(x) \approx q(x_k,\delta)$ in a
close vicinity around the current iterate~$x_k$.

Quasi-Newton methods are used in situations where the curvature
information inherent in the Hessian is important at the same time as
it is too expensive or impossible to compute the Hessian. The idea is
to introduce a representation for the Hessian and compute an estimate
of the Hessian using zero- and first-order information (function
values and their gradients).  More specifically the existing
quasi-Newton methods are designed to represent the cost function
according to the following model
\begin{align}
\label{eq:16a}
f_{\text{q}}(x_k+\delta) &= f(x_k) + \delta^{\Transp} \nabla f(x_k) + \frac{1}{2} \delta^{\Transp} H_k \delta,
\end{align}
for some matrix $H_k$ representing the Hessian. 

In order to see how to update the Hessian approximation as the
algorithm proceeds let us assume that $x_k$ and $x_{k+1}$ are known. 
The line segment $r_k(\tau)$ connecting these two iterates can be expressed as
\begin{align}
\label{eq:qN:LineSegment}
r_k(\tau) = x_k + \tau(x_{k+1} - x_k), \qquad \tau\in [0, 1].
\end{align}
The fundamental theorem of calculus states that 
\begin{align}
\int_0^1 \frac{\partial }{\partial \tau} \nabla f(r_k(\tau)) \myd\tau
\notag
&= \nabla f(r_k(1)) - \nabla f(r_k(0)) \\
&= \nabla f(x_{k+1}) - \nabla f(x_{k}) \label{eq:qN:FTC}
\end{align}
and the chain rule allows us to express the integrand as
\begin{align}
\notag
\frac{\partial }{\partial \tau} \nabla f(r_k(\tau)) &= \nabla^2 f(r_k(\tau)) \frac{\partial r_k(\tau)}{\partial \tau} \\
\label{eq:qN:chainRule}
&= \nabla^2 f(r_k(\tau)) (x_{k+1} - x_{k}).
\end{align}
By inserting~\eqref{eq:qN:chainRule} into~\eqref{eq:qN:FTC} we conclude that
\begin{align}
\label{eq:qN:int1}
\int_0^1 \nabla^2 f(r_k(\tau)) (x_{k+1} - x_{k}) \myd\tau = \nabla f(x_{k+1}) - \nabla f(x_{k}).
\end{align}
We introduce the following notation 
\begin{align}
	\label{eq:ysDef}
  s_k \triangleq x_{k+1} - x_{k},
\end{align}
for the differences between two adjacent iterates. The integral~\eqref{eq:qN:int1} can be written in a more convenient form according to
\begin{align}	
\label{eq:qN:int}
\nabla f(x_{k+1}) - \nabla f(x_{k}) = \left [ \int_0^1 \nabla^2 f(r_k(\tau)) \myd\tau \right ] \  s_k, 
\end{align}
which we will refer to as the \emph{quasi-Newton integral}. The
interpretation of the above development is that the difference between
two consecutive gradients constitute a \emph{line
  integral observation of the Hessian}. The challenge is that the
Hessian is unknown and there is no functional form available for it.

%
%
%
The approach taken by existing quasi-Newton algorithms is to assume
that the Hessian can be described by a zero-order Taylor expansion,
\begin{align}
\nabla^2 f(r_k(\tau)) \approx H_{k+1}, \qquad \tau\in[0, 1], 
\end{align}
implying that the integral equation in~\eqref{eq:qN:int} is approximated according to
\begin{align}
\label{eq:qN:secantCond}
\nabla f(x_{k+1}) - \nabla f(x_{k}) = H_{k+1} s_k.
\end{align}
This equation is known as the \emph{secant condition} or the
\emph{quasi-Newton equation}~\cite{Fletcher:1987,NocedalW:2006}. It
constrains the possible choices for the Hessian approximation
$H_{k+1}$ since in addition to being symmetric, it should at least
satisfy the secant condition~\eqref{eq:qN:secantCond}. This is still
not enough to uniquely determine the Hessian and we are forced to
include additional assumptions to find the Hessian
approximation. Depending on which assumptions are made we recover the
standard quasi-Newton algorithms. See~\cite{Hennig:2015} for
additional details.

Our approach is fundamentally different in that we will allow the
Hessian to evolve according to a general nonlinear function which is
represented using a Gaussian process.

\subsection{Stochastic quasi-Newton formulation}
\label{sec:SQN_stochastic}
%
%
\label{sec:SqN}
The situation we are interested in corresponds to the case where we
have access to noisy evaluations of the cost function and its
gradients. In particular, assume that the computed gradient $g_k$ can be
modelled as
\begin{align}
  \label{eq:2}
  {g}_k &= \nabla f(x_k) + v_k
\end{align}
and where we make the simplifying assumption that $v_k$ is independent
and identically distributed according to
\begin{align}
  \label{eq:3}
  v_k \sim \mathcal{N}(0,R), \qquad R \succ 0.
\end{align}
Therefore, via~\eqref{eq:ysDef}--\eqref{eq:qN:int}
\begin{align}
	\notag
  \nabla f(x_{k+1}) - \nabla f(x_{k}) &= \left [ \int_0^1 \nabla^2 f(r_k(\tau)) \myd\tau \right ] \  s_k \\
  \label{eq:StochQNint}  
      &= {g}_{k+1} - {g}_k + v_{k+1} - v_k,
\end{align}
so that if we define 
\begin{align}
  \label{eq:6}
  {y}_k &= {g}_{k+1} - {g}_k
\end{align}
then we obtain a \emph{stochastic quasi-Newton integral} according to 
\begin{align}
  \label{eq:StochQNint}
  {y}_k &= \left [ \int_0^1 \nabla^2 f(r_k(\tau))\myd\tau  \right ]s_k + w_k,
\end{align}
where $w_k = v_{k} - v_{k+1}$. Based on~\eqref{eq:StochQNint} the
Hessian can now be estimated via the gradients that we have
available. To enable this we first need a suitable representation for
the Hessian. However, before we make that choice in the subsequent
section, let us rewrite the integral slightly to clearly exploit the
fact that the Hessian is by construction a symmetric matrix.

%
%

Note that since $\nabla^2 f(r_k(\tau)) s_k$ is a column vector we can
straightforwardly apply the vectorisation operator inside the integral
in~\eqref{eq:StochQNint} without changing the result,
\begin{align}
	\notag
	\nabla^2 f(r_k(\tau)) s_k &= \vecV{ \nabla^2 f(r_k(\tau)) s_k } \\
	\notag
	&= (s^{\Transp}_k \otimes I) \vecV{\nabla^2 f(r_k(\tau))} \\
	\label{eq:GP:SqNIntVec}
	&= (s^{\Transp}_k \otimes I) \vecV{\nabla^2 f(r_k(\tau))},
\end{align}
where $\otimes$ denotes the Kronecker product and $\vecV{\cdot}$ is the
vectorisation operator that when applied to an $n \times m$ matrix $A$
produce a column vector by stacking each column $A_i \in \R^n$ for
$i=1,\ldots,m$ via
\begin{align}
  \label{eq:11}
  \vecV{A} &\triangleq \bmat{A_1 \\ \vdots\\A_m} \in \R^{nm \times 1}
\end{align}
The whole point of this exercise is that we have now isolated the
Hessian in a vectorised form~$\vecV{\nabla^2 f(r_k(\tau))}$ inside the
resulting integral
\begin{align}
  \label{eq:StochQNint2}
  {y}_k = (s^{\Transp}_k \otimes I) \int_0^1 \vecV{\nabla^2 f(r_k(\tau))} \myd\tau + w_k.
\end{align}
The so-called half-vectorisation operator\footnote{For a symmetric
  $n\times n$ matrix $A$, the vector $\vecV{A}$ contains redundant
  information. More specifically, we do not need to keep the
  $n(n-1)/2$ entries above the main diagonal. The half-vectorisation
  $\vecVh{A}$ of a symmetric matrix $A$ is obtained by vectorising
  only the lower triangular part of $A$.}
$\vecVh{\cdot}$~\cite{MagnusN:1980} is now a useful bookkeeping tool
to encode the fact that the Hessian is symmetric. It will effectively
extract the unique elements of the full Hessian and conveniently store
them in a vector for us according to
\begin{align}
	h(r_k(\tau)) = \vecVh{\nabla^2 f(r_k(\tau))}.     
\end{align}
We can then retrieve the full Hessian using the so-called \emph{duplication matrix} $D$, which is a matrix such that
\begin{align}
	\label{eq:halfVecOperator}
	\vecVh{\nabla^2 f(r_k(\tau))} = D h(r_k(\tau)).
\end{align}
More details and some useful results on the duplication matrix, the associated \emph{elimination matrix} $L$ ($\vecVh{A} = L \vecV{A}$) and their use are provided by~\cite{MagnusN:1980}. Finally, inserting~\eqref{eq:halfVecOperator}  into~\eqref{eq:StochQNint2} results in 
\begin{align}
  \notag
  {y}_k &= \underbrace{(s^{\Transp}_k \otimes I)D}_{= \bar{D}_k}
        \int_0^1 h(r_k(\tau)) \myd\tau + w_k\\
  \label{eq:SQNfinal}
      &= \bar{D}_k \int_0^1 h(r_k(\tau)) \myd\tau + w_k.
\end{align}
We have now arrived at an integral providing us with information about
the unique elements of the Hessian $\nabla^2 f(x)$ via the
noisy gradient observations in ${y}_k$.


\section{Local Hessian representation and learning}
\label{sec:LocalHessian}
%
%
If the cost function~$f(x)$ is truly quadratic, then the Hessian
$\nabla^2 f(x)$ is constant. On the other hand, in the more interesting
case when~$f(x)$ is not quadratic, the Hessian $\nabla^2 f(x)$ can
change rapidly and in a nonlinear way as a function of~$x$. Together
with the fact that our measurements of the cost function and it
gradients are noisy this motivates the need for a Hessian
representation that can accommodate nonlinear functions in a
stochastic setting.

There are of course many candidates for this, but the one we settle
for in this work is the \emph{Gaussian process}. It is a natural
candidate since it provides a non-parametric and probabilistic model
of nonlinear functions~\cite{RasmussenW:2006}. Besides these two
reasons we would also like to mention the fact that all existing
quasi-Newton algorithms can in fact be interpreted as maximum a
posteriori estimates where a very specific Gaussian prior was used for
the unknown variables. This was relatively recently discovered
by~\cite{Hennig:2015}. 

As a fourth and final motivation, the GP is also very simple to work
with since it only involves manipulations of multivariate Gaussian
distributions. An intriguing consequence of representing the Hessian
using a GP is that it opens up for new algorithms compared to the
previously available quasi-Newton algorithms. This was indeed also
mentioned as an avenue for future work in~\cite{Hennig:2015}, see
also~\cite{HennigK:2013}. However, apart from our very preliminary
work in~\cite{WillsS:2017} this is---to the best of our
knowledge---the first assembly of a working algorithm of this kind.

Motivated by the above we will assume a GP prior for the unique
elements of the Hessian according to
\begin{align}
	\label{eq:GPprior4Hessian}
	h \sim \GP(\mu, \kappa),
\end{align}
where $\mu$ denotes the mean value function and $\kappa$ denotes the
covariance function. With this in place we now need to solve a
slightly non-standard GP regression problem in order to incorporate
the gradient information into a posterior distribution of the Hessian
$p(h(x)\mid {y}_{k-p},\ldots, {y}_k)$, where
$\{{y}_i\}_{i=k-p}^k$ denotes the noisy gradient-difference
measurements that are available. The non-standard nature of this
regression problem is due to the integral that is present
in~\eqref{eq:SQNfinal}. However, the GP is closed under linear
operators~\cite{RasmussenW:2006}, and it is possible to make use of
the line integral formulation in~\eqref{eq:SQNfinal}. This will be
exploited to derive closed-form expression for the Hessian posterior
making use of the most recent $p+1$ observations available in the
gradients and iterates.

Since we will make frequent use of the most recent $p+1$ indecies, it is
conevient to define an index set $\ell_k$ as
\begin{align}
  \ell_k \triangleq \{ k-p,k-p+1, \ldots, k\}.
\end{align}
Using this notation, we introduce bold face notation for the stacked
gradient differences and iterate differences according to
\begin{subequations}
	\begin{align}
	\mathbf{{y}}_{\ell_k} &\triangleq \begin{pmatrix}
	{y}_{k-p}^{\Transp} & {y}_{k-p+1}^{\Transp} & \cdots & {y}_k^{\Transp}
	\end{pmatrix}^{\Transp},\\
	\mathbf{s}_{\ell_k} &\triangleq \begin{pmatrix}
	s_{k-p}^{\Transp} & s_{k-p+1}^{\Transp} & \cdots & s_k^{\Transp}
	\end{pmatrix}^{\Transp}.
	\end{align}
\end{subequations}
The joint distribution of the unknown~$h(x)$ and the
known~$\mathbf{{y}}_{\ell_k}$ is given by
\begin{align}
  \label{eq:GP:jointDist}
  \begin{pmatrix}
    h(x)\\
    \mathbf{{y}}_{\ell_k}
  \end{pmatrix} \sim \N\left(\begin{pmatrix}
      \mu(x)\\
      \mathbf{m}_{\ell_k}
    \end{pmatrix}, 
  \begin{pmatrix}
    \kappa(x,x) & K_{x, {\ell_k}}\\
    K_{{\ell_k} , x} & K_{{\ell_k}, {\ell_k}}
  \end{pmatrix}\right).
\end{align}
where the mean vector $\mathbf{m}_{\ell_k}$ is given by
\begin{align}
  \mathbf{m}_{\ell_k} \triangleq \begin{pmatrix} m_{k-p}^\Transp, \ldots m_k ^\Transp\end{pmatrix}^\Transp
\end{align}
and
\begin{align}
  \notag
  m_i &= \Exp{{y}_i} = \bar{D}_i \int_0^1
        \Exp{h(r_i(\tau))} \myd\tau \\
  \label{eq:GP:yMean}
      &= \bar{D}_i \int_0^1 \mu(r_i(\tau))\myd \tau.
\end{align}
The covariance matrix~$K_{{\ell_k}, {\ell_k}}$ is given by 
\begin{align}
  K_{{\ell_k}, {\ell_k}} \triangleq 
\begin{pmatrix} 
  K_{k-p, k-p} & K_{k-p, k-p+1}, & \cdots & K_{k-p, k} \\ 
  K_{k-p+1, k-p} & K_{k-p+1, k-p+1}, & \cdots & K_{k-p+1, k} \\ 
  \vdots &  & \ddots & \vdots\\ 
  K_{k, k-p} & K_{k, k-p+1}, & \cdots & K_{k, k} \\ 
\end{pmatrix}
\end{align}
with each matrix element $K_{i,j}$ given by
\begin{align}
K_{i,j} &= \Exp{\left({y}_i - \Exp{{y}_i}\right)\left({y}_j -
          \Exp{{y}_j}\right)^{\Transp}} \notag \\
	\notag 
	&= \ExpO\Bigg[\left( \bar{D}_i\int_{0}^{1}\left(h(r_i(\tau)) - \mu(r_i(\tau))\right)\myd\tau + w_i \right)\\
	\notag
	&\quad \times \left( \bar{D}_j\int_{0}^{1}\left(h(r_j(t)) - \mu(r_j(t))\right)\myd t + w_j \right)^{\Transp} \Bigg]\\
&= \bar{D}_i \int_{0}^{1}\int_{0}^{1} \kappa(r_i(\tau), r_j(t)) \myd \tau
  \myd t\bar{D}_j^{\Transp} + R\delta_{i,j}, \label{eq:GP:Cov1}
\end{align}
where $\delta_{i,j}$ reflects the covariance structure of (see~\eqref{eq:StochQNint})
\begin{align}
 \Exp{w_iw_j^\Transp} = \Exp{(v_{i} - v_{i+1}) (v_j - v_{j+1})^\Transp}, 
\end{align}
so that 
\begin{align}
\label{eq:14delta}
  \delta_{i,j} = 
\begin{cases} 
  2  & \text{if} \quad i=j,\\
  -1 & \text{if} \quad |i-j|=1,\\ 
  0 & \text{otherwise}.
\end{cases}
\end{align}
The terms in the cross-covariance $K_{x, {\ell_k}}$ are given by (and
analogously for~$K_{{\ell_k}, x}$)
\begin{align}
  K_{x, {\ell_k}} \triangleq 
\begin{pmatrix} 
  K_{x, k-p} & K_{x, k-p+1}, & \cdots & K_{x, k}
\end{pmatrix},
\end{align}
with each matrix element $K_{x,j}$ given by
\begin{align}
  K_{x, j} &= \Exp{ \left ( h(x) 
  - \mu(x)\right)\left({y}_j -
  \Exp{{y}_j}\right)^{\Transp}} \notag \\
&= \ExpO\Bigg[\left( h(x) 
  - \mu(x) \right) \notag\\ 
  &\quad \times \left( \bar{D}_j\int_{0}^{1}\left(h(r_j(t)) - \mu(r_j(t))\right)\myd t + e_j \right)^{\Transp} \Bigg] \notag \\
 &= \int_{0}^{1} \kappa(x, r_j(t)) \myd
  t\, \bar{D}_j^{\Transp}.
	\label{eq:GP:Cov2}
\end{align}
The posterior distribution can now be computed by applying the
standard result of conditioned Gaussian distributions to the joint
Gaussian distribution~\eqref{eq:GP:jointDist} resulting in
\begin{subequations}
  \label{eq:16}
  \begin{align}
    h(x) \mid \mathbf{{y}}_{\ell_k} &\ \sim \ \N\left(
                                    \phi_{\ell_k}(x),
                                    \Sigma_{\ell_k}(x)
                                    \right),
\end{align}
where
\begin{align}
	\phi_{\ell_k}(x) &= \mu(x) - K_{x,\ell_k}K_{\ell_k,
                           \ell_k}^{-1} (\mathbf{{y}}_{\ell_k} -
                           \mathbf{m}_{\ell_k}), \label{eq:19}\\ 
          \Sigma_{\ell_k}(x) &= \kappa(x,x) - K_{x,\ell_k}K_{\ell_k,
                           \ell_k}^{-1} K_{\ell_k, x}. \label{eq:20}
	\end{align}
\end{subequations}
To actually be able to work with this, the integrals
in~\eqref{eq:GP:Cov1} and~\eqref{eq:GP:Cov2} have to be
computed. These integrals do of course depend on our particular choice
of kernel. We make use of a multivariate version of the so-called
squared exponential kernel
\begin{align}
\label{eq:GP:SEkernel}
\kappa(x, x^{\prime}) = M\exp\left(-\frac{1}{2}(x - x^\prime)^\Transp V (x - x^\prime)\right),
\end{align}
where the positive definite symmetric matrix $M \succ 0$ describes the covariance effect on each element
of $h(\cdot)$ and the positive definite symmetric matrix~$V\succ 0$ acts as an inverse length
scale. The resulting integrals in~\eqref{eq:GP:Cov1}
and~\eqref{eq:GP:Cov2} are non-trivial, but they can be efficiently
computed using standard numerical tools, see~\cite{HendriksJWS:2018} for all the details.
When a different kernel is used, the alternatives are to either solve
the corresponding integrals or to employ the simplifying assumption we
develop in the subsequent section. 

We conclude this section by illustrating the above GP approach on a
simple one dimensional example. Consider the function
\begin{align}
  \label{eq:23}
  f(x) = 4(x-6)^2 + e^{1.2x-5} + 10 - 10\sin(1.2x).
\end{align}
We obtain $N=12$ noise corrupted observations of the gradient where $g_k
= \nabla f(x_k) + v_k$ and $v_k \sim \mathcal{N}(0,100)$. The iterates
$x_k \in [-5,7]$. We employ the squared-exponential kernel
\eqref{eq:GP:SEkernel} with $M=1000$ and $V=0.2$ for the covariance
$\kappa(x,x^\prime)$ and select the mean function as $\mu(x) =
100$. The resulting Hessian approximation in terms of the GP conditioned on $y_{1:N}$ and $s_{1:N}$ is shown in Figure~\ref{fig:hess_gp_fig} for a range of $x$ values. Note that in the region $x \in [-15,-5]$, we have not learned very much so the 
GP converges to the prior. Whereas, for the region where we observe 
the gradient, the resulting GP approximation supports the exact Hessian.

\begin{figure}[bth]
\begin{center}
  \includegraphics[width=0.95\columnwidth]{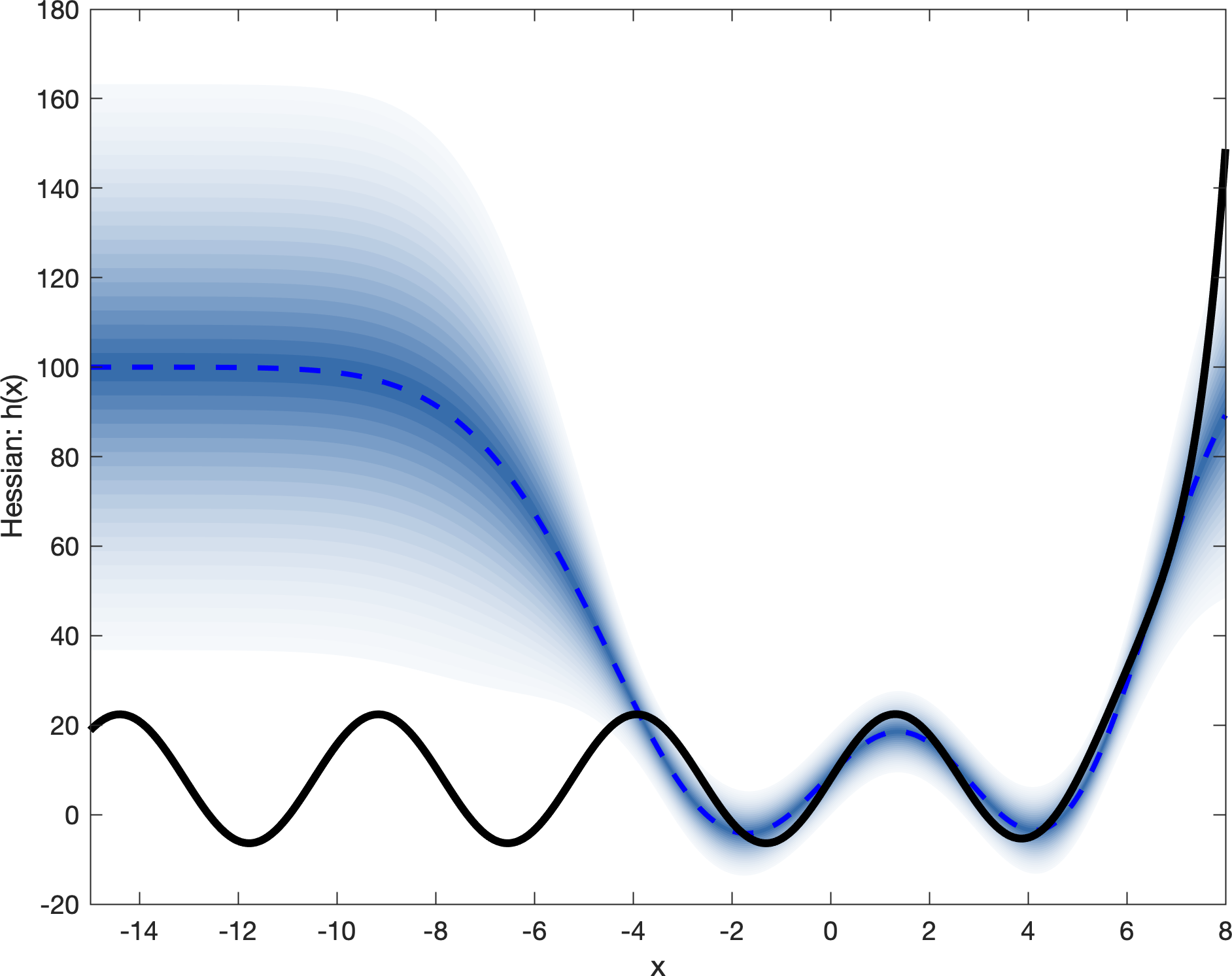}%
  \caption{GP Hessian approximation for the function~\eqref{eq:23}. Exact
    Hessian (solid black line), GP based Hessian mean (dashed blue)
    and covariance (shaded blue).}
  \label{fig:hess_gp_fig}
\end{center}
\end{figure}

\subsection{A simplifying approximation}
The above development can be simplified by invoking the standard
quasi-Newton approximation~\eqref{eq:qN:secantCond}, effectively
removing the integrals. Indeed, if the Hessian can be modelled as a
constant matrix between $x_k$ and $x_{k+1}$, then the
measurements can be simplified to
\begin{align}
  \label{eq:StochQNint2}
  {y}_k &= \left [ \int_0^1 \nabla^2 f(r_k(\tau))\myd\tau  \right
                  ]s_k + w_k \notag \\
                &\approx \nabla^2 f(x_k) \left [ \int_0^1 \myd\tau  \right
                  ]s_k + w_k \notag \\
                &= \nabla^2 f(x_k) s_k + w_k.
\end{align}
We can replace the unknown Hessian $\nabla^2 f(x_k)$ with the same GP
as used above, so that the measurement equation now becomes
\begin{align}
  {y}_k &= (s_k^\Transp \otimes I) h(x_k) + w_k.
\end{align}
Therefore, in order to compute the posterior distribution of
$h(x_k)$ given the most recent $p+1$ observations, we
follow an analogous path to the above discussion and employ~\eqref{eq:16}, but where the required terms are now instead given by
\begin{align}
  m_i = \Exp{{y}_i} = \bar{D}_i \mu(x_i),
\end{align}
and
\begin{align}
	\notag
  K_{i,j} &= \Exp{({y}_i-m_i)({y}_j-m_j)^\Transp}\\
  &= \bar{D}_i \kappa(x_i,x_j) \bar{D}_j^\Transp + R \delta_{i,j},
\end{align}
where $\delta_{i,j}$ is defined in~\eqref{eq:14delta}, and
\begin{align}
  K_{x, j} &= \Exp{ \left ( h(x) 
  - \mu(x)\right)\left({y}_j - m_j\right)^{\Transp}} 
 = \kappa(x, x_j)\, \bar{D}_j^{\Transp}.
	\label{eq:GP:Cov34}
\end{align}

\subsection{Computing the search direction}
In the deterministic optimisation setting, convergence to local minima
often requires that the search direction $p_k$ is a so-called descent
direction that satisfies (see e.g. \cite{NocedalW:2006})
\begin{align}
  \label{eq:12}
  p_k^\Transp \nabla f(x_k) < 0.
\end{align}
This essentially means that $p_k$ is sufficiently aligned with the negative
gradient direction $-\nabla f(x_k)$, which then affords local progress
along the direction $p_k$. 

When the search direction is determined by a scaling matrix
\begin{align}
  \label{eq:13}
  p_k = - B_k \nabla f(x_k),
\end{align}
then the descent condition can be guaranteed if $B_k \succ 0$ since
$-\nabla f^\Transp (x_k) B_k \nabla f(x_k) < 0$ for all
$\nabla f(x_k) \neq 0$. This is sufficient, but not necessary. If
$B_k$ is chosen as the inverse Hessian then there is no guarantee that
$B_k \succ 0$ unless the problem is strongly convex. In the more
general non-convex setting then the Hessian can be indefinite,
particularly in early iterations, and this should be carefully dealt
with. Interestingly, standard BFGS and DFP quasi-Newton methods
combat this by maintaining positive definite matrices $B_k$ as the
algorithm progresses (see e.g. \cite{NocedalW:2006}).

In the stochastic setting of this paper, we cannot guarantee a descent
direction since we do not have access to the gradient $\nabla
f(x_k)$, but only a noisy version ${g}_k$. In the current
section, we propose a mechanism to ensure that the search direction is
a descent direction in expectation. That is, we will compute a search
direction 
\begin{align}
  \label{eq:15}
  {p}_k &= -B_k {g}_k
\end{align}
based on the GP quasi-Newton approximatation from
Section~\ref{sec:LocalHessian} and show that
\begin{align}
  \label{eq:17}
  \Exp{{p}_k^\Transp \nabla f(x_k)\ \mid \ x_k, s_{\ell_{k-2}}, {y}_{\ell_{k-2}}} < 0.
\end{align}
In particular, let the matrix $B_k$ be defined as 
\begin{subequations}
\label{eq:24}
\begin{align}
  \label{eq:18}
  B_k = \left ( H_k + \lambda_k I \right )^{-1},
\end{align}
where $H_k$ comes from the mean value of the GP quasi-Newton
approximation based on $\{s_{\ell_{k-2}}, {y}_{\ell_{k-2}}\}$.  More
specifically, using $\phi_{\ell_{k-2}}(x)$ from \eqref{eq:19}, $H_k$
is defined as (note
that if $k<2$, then $\{s_{\ell_{k-2}}, {y}_{\ell_{k-2}}\}$ is empty
and $\phi_{\ell_{k-2}}(x) = \mu(x)$)
\begin{align}
  \label{eq:9}
  H_k = D \phi_{\ell_{k-2}}(x_k),
\end{align}
where $D$ is the duplication matrix defined in
\eqref{eq:halfVecOperator}, and $\lambda_k$ is selected as 
\begin{align}
  \label{eq:21}
  \lambda_k = \epsilon - \min \{ 0 , \eta_k\},
\end{align}
\end{subequations}
where $\eta_k$ is the minimum eigenvalue of $H_k$ and $\epsilon >0$ is
a user-defined tolerance. With this choice for the scaling matrix
$B_k$, we have the following Lemma.
\begin{lemma}
  Let the matrix $B_k$ be given by \eqref{eq:24} and the search
  direction $p_k$ be given by \eqref{eq:15}, then 
  \begin{align}
    \label{eq:25}
    \Exp{{p}_k^\Transp \nabla f(x_k)\ \mid \ x_k, s_{\ell_{k-2}}, {y}_{\ell_{k-2}}} < 0.
  \end{align}
\end{lemma}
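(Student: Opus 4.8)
The plan is to reduce the conditional expectation to a single quadratic form in the true gradient and then sign it using the construction of $B_k$. Writing $p_k = -B_k g_k$ from~\eqref{eq:15}, substituting $g_k = \nabla f(x_k) + v_k$ from~\eqref{eq:2}, and using that $B_k$ is symmetric (as $H_k$ is symmetric by construction and $\lambda_k I$ is symmetric), I would expand
\begin{align*}
  p_k^\Transp \nabla f(x_k)
  &= -\nabla f(x_k)^\Transp B_k \nabla f(x_k) \\
  &\quad - v_k^\Transp B_k \nabla f(x_k).
\end{align*}
Taking the conditional expectation given $\{x_k, s_{\ell_{k-2}}, y_{\ell_{k-2}}\}$ then separates a deterministic quadratic term from a cross term that is linear in the current noise $v_k$.

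The crux is that the cross term vanishes, which hinges on the two-step lag in the index set $\ell_{k-2}$. On one hand, $B_k$ is built from $H_k = D\,\phi_{\ell_{k-2}}(x_k)$ in~\eqref{eq:9}, a deterministic function of $x_k$ and the conditioning data $\{s_{\ell_{k-2}}, y_{\ell_{k-2}}\}$; hence $B_k$ is measurable with respect to the conditioning and can be pulled out of the expectation. On the other hand, since $y_{k-2} = g_{k-1} - g_{k-2}$, the most recent noise term entering the conditioning data is $v_{k-1}$, so by the i.i.d.\ assumption~\eqref{eq:3} the current noise $v_k$ is independent of the conditioning and has zero mean. Consequently
\begin{align*}
  &\Exp{v_k^\Transp B_k \nabla f(x_k) \mid x_k, s_{\ell_{k-2}}, y_{\ell_{k-2}}} \\
  &\qquad = \Exp{v_k}^\Transp B_k \nabla f(x_k) = 0,
\end{align*}
leaving the conditional expectation equal to $-\nabla f(x_k)^\Transp B_k \nabla f(x_k)$.

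It then remains to establish $B_k \succ 0$. Since $H_k$ is symmetric, the eigenvalues of $H_k + \lambda_k I$ are $\mu_i + \lambda_k$, where $\mu_i$ ranges over the eigenvalues of $H_k$, so the smallest equals $\eta_k + \lambda_k$. I would split on the definition~\eqref{eq:21} of $\lambda_k$: if $\eta_k \ge 0$ then $\lambda_k = \epsilon$ and the smallest eigenvalue is $\eta_k + \epsilon \ge \epsilon > 0$; if $\eta_k < 0$ then $\lambda_k = \epsilon - \eta_k$ and the smallest eigenvalue is exactly $\epsilon > 0$. In either case $H_k + \lambda_k I \succ 0$, so $B_k = (H_k + \lambda_k I)^{-1} \succ 0$, and therefore $-\nabla f(x_k)^\Transp B_k \nabla f(x_k) < 0$ whenever $\nabla f(x_k) \neq 0$, which is~\eqref{eq:25}.

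The main obstacle is the independence step: one must verify that conditioning on the lagged data $\{s_{\ell_{k-2}}, y_{\ell_{k-2}}\}$ rather than the current data $\{s_{\ell_{k}}, y_{\ell_{k}}\}$ is precisely what decouples the estimator $B_k$ from the noise $v_k$ in the present gradient. Had the most recent difference $y_{k-1} = g_k - g_{k-1}$ been included, $B_k$ would depend on $v_k$, the cross term would not vanish, and the sign of the expectation could not be controlled. The positive-definiteness argument, by contrast, is a routine eigenvalue shift.
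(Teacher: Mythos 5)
Your proof is correct and follows essentially the same route as the paper's: condition to make $B_k$ deterministic and positive definite, expand $p_k^\Transp\nabla f(x_k)$ into a quadratic term plus a cross term, and kill the cross term using $\Exp{v_k}=0$. You are in fact slightly more careful than the paper, since you spell out why the lag in $\ell_{k-2}$ decouples $B_k$ from $v_k$, verify $H_k+\lambda_k I\succ 0$ explicitly, and note that strict negativity requires $\nabla f(x_k)\neq 0$.
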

\begin{proof}
Since we are conditioning on
$\{ x_k, s_{\ell_{k-2}}, y_{\ell_{k-2}}\}$, then $H_k$ is
deterministic. Furthermore, since $H_k$ is symmetric by construction and
$\lambda_k > 0$ is chosen such that ${H_k + \lambda_k I \succ 0}$, then
$B_k \succ 0$ is also deterministic. Then,
\begin{align*}
  &\Exp{p_k^\Transp \nabla f(x_k)} 
= \Exp{-(\nabla f(x_k) +
  v_k)^\Transp B_k \nabla f(x_k)},\\
  &= -\Exp{-v_k}^\Transp B_k \nabla
    f(x_k) - \nabla
    f^\Transp(x_k) B_k \nabla
    f(x_k),\\
  &< 0.
\end{align*}
\end{proof}


\section{Stochastic line search}
\label{sec:sls}
It is typical within the stochastic optimisation setting to ensure 
stability of the Markov-chain
\begin{align}
  \label{eq:7}
  x_{k+1} = x_k + \alpha_k p_k
\end{align}
by requiring that the step-length $\alpha_k$ satisfies
\begin{align}
  \label{eq:8}
  \sum_{k=1}^\infty \alpha_k = \infty, \qquad \sum_{k=1}^\infty \alpha_k^2 < \infty,
\end{align}
with a standard choice being $\alpha_k = \alpha_0 / k$ with $\alpha_0
> 0$. In addition to this, the
search direction $p_k$ must also satisfy certain conditions, that are
typically met when $p_k = -g_k$ for example (see
\cite{BottouCN:2018} for an excellent review of these conditions and
associated convergence proofs). 

While these choices are often sufficient, they are also known to be
conservative \cite{AsiD20129siam,MoulinesB:2011}. The initial
step-length $\alpha_0$ is often forced to be small in order to provide
stability, since the optimal choice depends on certain Lipschitz
constants \cite{BottouCN:2018}, that are typically not easy to obtain
for many practical problems. One way to combat this sensitivity is to
scale the gradient by the inverse Hessian, but this alone is not
sufficient to ensure stability. Another approach is to use adaptive
scaling methods that have found enormous impact in the machine
learning literature (see \cite{KingmaB:2015} and
\cite{luo2018adaptive} for example). A further mechanism, used in
deterministic optimisation, is to select $\alpha_k$ such that the new
cost $f(x_k + \alpha_k p_k)$ sufficiently decreases, which is often
called a line-search procedure.

In this section we will discuss a line-search procedure for stochastic
optimisation. In early iterations it mimics a deterministic
line-search using the noisy cost function to regulate the
step-length. As the iteration $k$ increases, we converge to taking
steps that mimic $\alpha_k = \alpha_0 / k$, which then affords
standard convergence results \cite{BottouCN:2018}. In this sense, the
line-search procedure can be interpreted as a mechanism to find
$\alpha_0$ such that steps $\alpha_k = \alpha_0 / k$ are likely to
produce a stable Markov-chain.

Towards this end, in the deterministic setting, the question of how far to
move in the search direction $p_k$ can be formulated as the following
scalar minimisation problem
\begin{align}
  \label{eq:10}
  \min_{\alpha} f(x_k + \alpha p_k), \quad \alpha > 0.
\end{align}
One commonly used approach is to aim for a sub-optimal solution
to~\eqref{eq:10} that guarantees a sufficient decrease in the
cost. For example, the popular Armijo condition \cite{Armijo:1966}
requires a step-length~$\alpha_k$ such that 
\begin{align}
  \label{eq:DetWolfe}
  f(x_k + \alpha p_k) \leq f(x_k) + c \alpha_k g_k^{\Transp} p_k,
\end{align}
where the user-defined constant $c\in (0, 1)$. The above condition is
also known as the first Wolfe condition \cite{Wolfe:1969, Wolfe:1971}.

In the stochastic setting, we unfortunately do not have access to
either $f(x)$ or $\nabla f(x)$ but only noisy versions of them. This
presents the following difficulties in employing an Armijo type condition:
\begin{itemize}
\item We may accept a step-length $\alpha_k$ in the case where the
  observed cost has sufficiently decreased, even though the true cost
  may in fact have increased;
\item We may reject a suitable $\alpha_k$ when the observed cost
  increased, even though the true cost may have decreased sufficiently.
\end{itemize}

Consider the case when the measurements of the function and its
gradient are given by
\begin{align}
  \label{eq:CostGradMeas}
  \widehat{f}(x_k)=f(x_k)+e_k, \qquad {g}_k= \nabla f(x_k) + v_k,
\end{align}
where $e_k$ and $v_k$ denote independent noise on the function and
gradient evaluations, respectively. Furthermore we assume that
  \begin{alignat}{6}
    \Exp{e_k}&=b, \quad& &\Cov{e_k} = \sigma_f^2, \quad
    \Exp{v_k}&=0, 
    \quad& &\Cov{v_k} = R.
  \end{alignat}
Since $\widehat{f}$ and ${g}_k$ are random variables, we explore the idea of
requiring~\eqref{eq:DetWolfe} to be fulfilled in expectation when the
exact quantities are replaced with their stochastic counterparts, that is
\begin{align}
  \label{eq:StochWolfe1}
  \Exp{\widehat{f}(x_k + \alpha {p}_k) - \widehat{f}(x_k) -
  c\alpha_k {g}_k^{\Transp} {p}_k \mid x_k,
  s_{\ell_{k-2}}, \widehat{y}_{\ell_{k-2}}} \leq 0,
\end{align}
This is certainly one way in which we can reason about the Armijo
condition in the stochastic setting we are interested in. Although
satisfaction of \eqref{eq:StochWolfe1} does not leave any guarantees
when considering a single measurement, it still serves as an important
property that could be exploited to provide robustness for the entire
optimisation procedure. To motivate our proposed algorithm we hence
start by establishing the following results.
\begin{theorem}[Stochastic Armijo Condition]
	\label{lem:sW1}
	Assume that 
	\begin{itemize}
        \item[]A1: $\quad f(\cdot)$ is twice continuously
          differentiable on an open set $\mathcal{X} \subseteq \R^{n_x}$;
        \item[]A2: the gradient is unbiased, $\Exp{{g}_k}=\nabla f(x_k)$;
        \item[]A3: the expected cost obeys $\Exp{\widehat{f}(x_k)}=f(x_k)+b$;
        \item[]A4: a descent direction is expected, $\Exp{{p}_k^\Transp \nabla f(x_k)}<0$. 
	\end{itemize}
	Then (for $\alpha$ small)
          \begin{align}
            \label{eq:StochWolfe1a}
            \Exp{\widehat{f}(x_k + \alpha {p}_k) - \widehat{f}(x_k) -
            c\alpha_k {g}_k^{\Transp} {p}_k \mid x_k,
            s_{\ell_{k-2}}, \widehat{y}_{\ell_{k-2}}} \leq 0,
          \end{align}
	where 
	\begin{align}
          \label{eq:SLSsW1c1}
          0<c<\bar{c} = \frac{\nabla f^\Transp(x_k) B_k \nabla
          f(x_k)}{\nabla f^\Transp(x_k) B_k \nabla f(x_k) + \Tr \{B_kR\}}.
	\end{align} 
\end{theorem}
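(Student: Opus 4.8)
The plan is to introduce the conditional-expectation function
\[
\psi(\alpha) \triangleq \Exp{\widehat{f}(x_k + \alpha p_k) - \widehat{f}(x_k) - c\,\alpha\, g_k^\Transp p_k \mid x_k, s_{\ell_{k-2}}, \widehat{y}_{\ell_{k-2}}},
\]
where I identify the trial step $\alpha$ with $\alpha_k$, and to show that $\psi(0)=0$ while $\psi'(0)<0$ precisely when $c<\bar c$. Continuity of $\psi$ then yields $\psi(\alpha)\le 0$ for all sufficiently small $\alpha>0$, which is \eqref{eq:StochWolfe1a}. Exactly as in the proof of the Lemma, conditioning on $\{x_k, s_{\ell_{k-2}}, \widehat y_{\ell_{k-2}}\}$ makes $B_k$ (hence the rule $p_k=-B_k g_k$) deterministic apart from the fresh gradient noise $v_k$, so every expectation below is over $v_k$ and the function noise $e_k$ only. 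That $\psi(0)=0$ is immediate, since at $\alpha=0$ the first two terms are $\widehat f(x_k)-\widehat f(x_k)$ and the last term vanishes.

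Next I would compute the two inner products that govern the linear behaviour. Writing $g_k=\nabla f(x_k)+v_k$ and $p_k=-B_k g_k$, using A2 and $\Exp{v_k v_k^\Transp}=R$, a direct calculation gives
\[
\Exp{g_k^\Transp p_k} = -\Exp{g_k^\Transp B_k g_k} = -\left(\nabla f^\Transp(x_k) B_k \nabla f(x_k) + \Tr\{B_k R\}\right),
\]
the trace arising from $\Exp{v_k^\Transp B_k v_k}=\Tr\{B_k R\}$, and likewise
\[
\Exp{p_k^\Transp \nabla f(x_k)} = -\nabla f^\Transp(x_k) B_k \nabla f(x_k),
\]
which is A4 made explicit. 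The bias $b$ never appears in $\psi$: by A3 it enters $\Exp{\widehat f(x_k+\alpha p_k)}$ and $\Exp{\widehat f(x_k)}$ identically and cancels in the difference.

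I would then differentiate $\psi$ at the origin. Interchanging $\tfrac{d}{d\alpha}$ with the expectation and using $\tfrac{d}{d\alpha}\widehat f(x_k+\alpha p_k)\big|_{\alpha=0}=p_k^\Transp\nabla f(x_k)$ gives
\[
\psi'(0) = \Exp{p_k^\Transp \nabla f(x_k)} - c\,\Exp{g_k^\Transp p_k} = -\nabla f^\Transp B_k \nabla f + c\left(\nabla f^\Transp B_k \nabla f + \Tr\{B_k R\}\right),
\]
which is strictly negative if and only if $c\left(\nabla f^\Transp B_k \nabla f + \Tr\{B_k R\}\right) < \nabla f^\Transp B_k \nabla f$, that is $c<\bar c$ with $\bar c$ as in \eqref{eq:SLSsW1c1}; here $B_k\succ 0$ forces the denominator to exceed the numerator, so $\bar c\in(0,1)$ is well defined and A4 is exactly the statement that the numerator is positive. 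Since $\psi(0)=0$ and $\psi'(0)<0$, a first-order expansion $\psi(\alpha)=\psi'(0)\alpha + o(\alpha)$ (valid by A1) makes $\psi(\alpha)\le 0$ for all small $\alpha>0$, completing the argument.

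The step I expect to be the real obstacle is the analytic justification for differentiating under the expectation, i.e.\ for writing $\psi'(0)$ as above rather than merely expanding $f$ pointwise. This requires dominating $\partial_\alpha\widehat f(x_k+\alpha p_k)=p_k^\Transp\nabla f(x_k+\alpha p_k)$ uniformly for $\alpha$ near $0$, which couples the Gaussian tails of $v_k$ (so that $p_k$ has finite moments) with the growth of $\nabla f$, a property A1 supplies only locally on the open set $\mathcal{X}$. A careful write-up would either restrict to a compact neighbourhood on which $\nabla^2 f$ is bounded and control the exponentially small probability of $x_k+\alpha p_k$ leaving it, or strengthen A1 to a globally Lipschitz gradient so that the second-order Taylor remainder is $O(\alpha^2)$ in expectation and the strictly negative linear term dominates for small $\alpha$.
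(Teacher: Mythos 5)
Your argument is correct and essentially the same as the paper's: the appendix likewise Taylor-expands $f$ along $p_k$ (with an explicit Lagrange remainder $\tfrac{\alpha_k^2}{2}p_k^\Transp\nabla^2 f(\bar x_k)p_k$ in place of your $\psi(0)=0$, $\psi'(0)<0$ packaging), computes the same two expectations $\Exp{p_k^\Transp\nabla f(x_k)}=-\nabla f^\Transp(x_k)B_k\nabla f(x_k)$ and $\Exp{g_k^\Transp p_k}=-\left(\nabla f^\Transp(x_k)B_k\nabla f(x_k)+\Tr\{B_kR\}\right)$, and concludes that the $O(\alpha)$ term is negative precisely when $c<\bar c$, so small $\alpha_k$ suffices. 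The domination issue you flag is equally present in the paper, which simply treats $\Exp{\tfrac12 p_k^\Transp\nabla^2 f(\bar x_k)p_k}$ as a finite constant $\kappa$; your write-up is, if anything, slightly more careful on that point.
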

\begin{proof}
  See Appendix~\ref{app:StochWolfeCond1}.
\end{proof}
Note that as the gradient noise variance $R$ vanishes, we recover
the classical result that $\bar{c} < 1$ (see
e.g. \cite{NocedalW:2006}). 

%
Relying upon this result, we propose a line search with psuedo-code
given in Algorithm \ref{alg:SBLS}. An input to this algorithm is the
search direction ${p}_k$, which can be computed using any preferred
method.  The step length is initially set to be the minimum of the
natural step length 1 and the iteration dependent value $\xi/k$.  In
this way the initial step length is kept at 1 until $k>\xi$, a point
after which it is decreased at the rate $1/k$.  

Then we check whether the new point $x_k+\alpha {p}_k$ satisfies the stochastic Armijo condition.  If this is not the case, we decrease~$\alpha_k$
with the scale factor~$\rho$. This is repeated until the condition is
met, unless we hit an upper bound $\max\{0,\tau-k\}$ on the number of
backtracking iterations, where $\tau>k$ is a positive integer.  With
this restriction the decrease of the step length is limited, and when
$k\ge\tau$ we use a prescribed formula for step length no matter if
the stochastic Armijo condition is satisfied or not.
It should be noted that for finite $\xi >0$ and finite $\tau > 0$, the
following algorithm will eventually take steps according to
$\alpha_k = \frac{\xi}{k}$
and thus mimic the step-length properties required for convergence
in standard stochastic optimisation algorithms (see \cite{BottouCN:2018}).

\begin{algorithm}[htb!]
	\caption{\textsf{Stochastic backtracking line search}}
	\begin{algorithmic}[1]
          \REQUIRE Iteration index $k$, spatial point $x_k$, search
          direction ${p}_k$, scale factor $\rho\in (0,1)$, reduction
          limit $\xi\ge1$, backtracking limit
          $\tau>0$. 
		\STATE Set the initial step length $\alpha_{k}=\min\{1,\xi/k\}$
		\STATE Set $i=1$
		\WHILE{$\widehat{f}(x_k + \alpha {p}_k) > \widehat{f}(x_k) + c\alpha_k {g}_k^{\Transp} {p}_k$ \textbf{and} $i\le\max\{0,\tau-k\}$}
			\STATE Reduce the step length $\alpha_{k} \leftarrow \rho \alpha_k$
			\STATE Set $i\leftarrow i+1$
		\ENDWHILE
	\end{algorithmic}
	\label{alg:SBLS}
\end{algorithm}


\section{Resulting algorithm}
\label{sec:ResAlg}
In this brief section, we summarise the main algorithm employed in the
simulations to follow. 
\begin{algorithm}[htb!]
  \caption{\textsf{Stochastic quasi-Newton}}
  \begin{algorithmic}[1]
    \REQUIRE \textbf{Search direction}: a mean and covariance function
    $\mu$ and $\kappa$, respectively, for the prior GP according to
    \eqref{eq:GPprior4Hessian}. A covariance $R \succ 0$ for the
    gradient noise $v_k$ and a memory length $p >
    0$. \textbf{Step-length}: parameters according to
    Algorithm~\ref{alg:SBLS}. A maximum number of iterations
    $k_{\max} > 0$ and set $k=0$ and and $x_0 \in \mathcal{X}$.
    \WHILE{$k < k_{\max}$} 
    \STATE Obtain the gradient estimate $g_k$.
    \IF{$k > 0$} 
    \STATE Compute $y_{k-1}$ and $s_{k-1}$.
    \ENDIF
    \STATE Compute the search direction $p_k = -B_kg_k$, where $B_k$
    is determined by \eqref{eq:18}.
    \STATE Calculate $\alpha_k$ using Algorithm~\ref{alg:SBLS} and $p_k$.
    \STATE Compute $x_{k+1} = x_k + \alpha_k p_k$.
    \STATE Update $k \leftarrow k+1$.
    \ENDWHILE
  \end{algorithmic}
  \label{alg:SQN}
\end{algorithm}


\section{Application to nonlinear system identification}
\label{sec:AppNLSYSID}
The method developed above is indeed generally applicable to a broad
class of stochastic and non-convex optimization problems. By way of
illustration we will consider its application to the problem of
maximum likelihood identification of nonlinear state space models
\begin{subequations}
\begin{align}
x_{t} &= f(x_{t-1},\theta) + w_t,\\
y_t &= g(x_t, \theta) + e_t,
\end{align}
\end{subequations}
where $x_t$ denotes the state, $y_t$ denotes the measurements and
$\theta$ denotes the unknown (static) parameters. The two nonlinear
functions $f(\cdot)$ and $g(\cdot)$ denotes the nonlinear functions
describing the dynamics and the measurements,
respectively. Furthermore, the process noise is Gaussian distributed
with zero mean and covariance $Q$, $w_t\sim \N(0, Q)$ and the
measurement noise is given by $e_t\sim\N(0, R)$. Finally, the initial
state is distributed according to $x_0 \sim p(x_0\mid\theta)$. The
problem we are interested in is to estimate the unknown
parameters~$\theta$ by making use of the available measurements
$y_{1:N} = \{y_1, y_2, \dots, y_{N}\}$ to maximize the likelihood
function $p(y_{1:N} \mid \theta)$
\begin{align}
\label{eq:App:MaxLik}
\widehat{\theta} = \argmax{\theta}{p(y_{1:N} \mid \theta)}.
\end{align}
The likelihood function can via repeated use of conditional
probabilities be rewritten as
\begin{align}
p(y_{1:N}\mid \theta) = \prod_{t=1}^{N} p(y_t\mid y_{1:t-1},\theta),
\end{align}
with the convention that $y_{1:0} = \emptyset$. The one step ahead
predictors are available via marginalization
\begin{align}
\notag
p(y_t&\mid y_{1:t-1}, \theta) = \int p(y_t, x_t \mid y_{1:t-1}, \theta) \myd x_t\\
&= \int p(y_t\mid x_t, \theta)  p(x_t\mid y_{1:t-1}, \theta) \myd x_t.
\end{align}
One intuitive interpretation of the above integral is that it
corresponds to averaging over all possible values for the state
$x_{t}$. The challenge is of course how to actually compute this
integral. By making use of particle filter 
\cite{Gordon:1993,Kitagawa:1993,StewartM:1992} to approximate the
likelihood we are guaranteed to obtain an unbiased
estimate~\cite{Delmoral:2004,MalikP:2011}. Likelihood gradients can
also be calculated using particle methods~\cite{PoyiadjisDS:2011},
which we employ in the simulations below.


The particle filter---which is one member of the family of sequential
Monte Carlo (SMC) methods---has a fairly rich history when it comes to
solving nonlinear system identification problems. For introductory
overviews we refer to~\cite{SchonLDWNSD:2015,Kantas:2015}.

\subsection{Numerical example -- scalar linear SSM}
The first example to be considered is the following simple linear time
series
\begin{subequations}
\begin{equation}
  \label{eq:linsys}
  \begin{array}{r}
    x_{t+1} = ax_t + v_t,\\
    y_t = cx_t + e_t,\\
  \end{array} \quad
  \begin{bmatrix}
    v_t \\ 
    e_t
  \end{bmatrix}
  \sim \mathcal{N} \left (
    \begin{bmatrix}
      0 \\ 
      0
    \end{bmatrix}
    ,
    \begin{bmatrix}
      q & 0 \\ 
      0 & r
    \end{bmatrix}
  \right )
\end{equation}
with the true parameters given by
$\theta^\star = \left [ a^\star, c^\star, q^\star, r^\star \right ] =
  \left [ 0.9, 1.0, 0.1, 0.5 \right ].$
\end{subequations}
Here we wish to estimate all four parameters $a,c,q$ and $r$ based on
observations $y_{1:N}$. The reason for including this example is that this
problem is generally considered to be solved. Indeed, using any
standard quasi-Newton line-search algorithm in combination with a
Kalman filter to evaluate the log-likelihood cost and
gradient vector will provide the necessary components to estimate
$\theta$. 

This is not a stochastic optimisation problem, but it is included so
that our new GP based quasi-Newton Algorithm (henceforth referred to
as QN-GP) can be profiled in a situation where standard algorithms
also apply. Therefore, in order to compare these methods, we generated
$100$ data sets $y_{1:N}$ with $N=100$ observations in each. The standard
quasi-Newton line-search algorithm was run as usual with no
modification. However, when running the new QNGP algorithm (and only
in this case) an extra noise term was added to each and every log-likelihood and gradient
evaluation. In particular, for the QNGP case we generated noisy
log-likelihoods and gradients via (where the noise realisation was
changed for every evaluation)
\begin{subequations}
\begin{align}
  \widehat{\ell}(\theta) &\triangleq \ell(\theta) + \eta, \quad   &\eta &\sim \mathcal{N}(0,1), \\
  \widehat{\nabla_\theta \ell}(\theta) &\triangleq \nabla_\theta
  \ell(\theta) + \eta_g, 
  &\eta_g &\sim \mathcal{N}(0,I).
\end{align}
\end{subequations}
Figure~\ref{fig:bode_compare} shows the Bode response for each Monte
Carlo run and for each Algorithm. The vast majority of estimates are
grouped around the true Bode response, with one or two estimates
being trapped in local minima that results in a poor
estimate. Interestingly, this is true for both the standard
quasi-Newton algorithm and the new QNGP algorithm. Both algorithms
were initialised with the same parameter value $\theta_0$, which
itself was obtained by generating a random stable system. 

\begin{figure}[!bth]
    \centering
    \begin{subfigure}[t]{\linewidth}
        \includegraphics[width=\columnwidth]{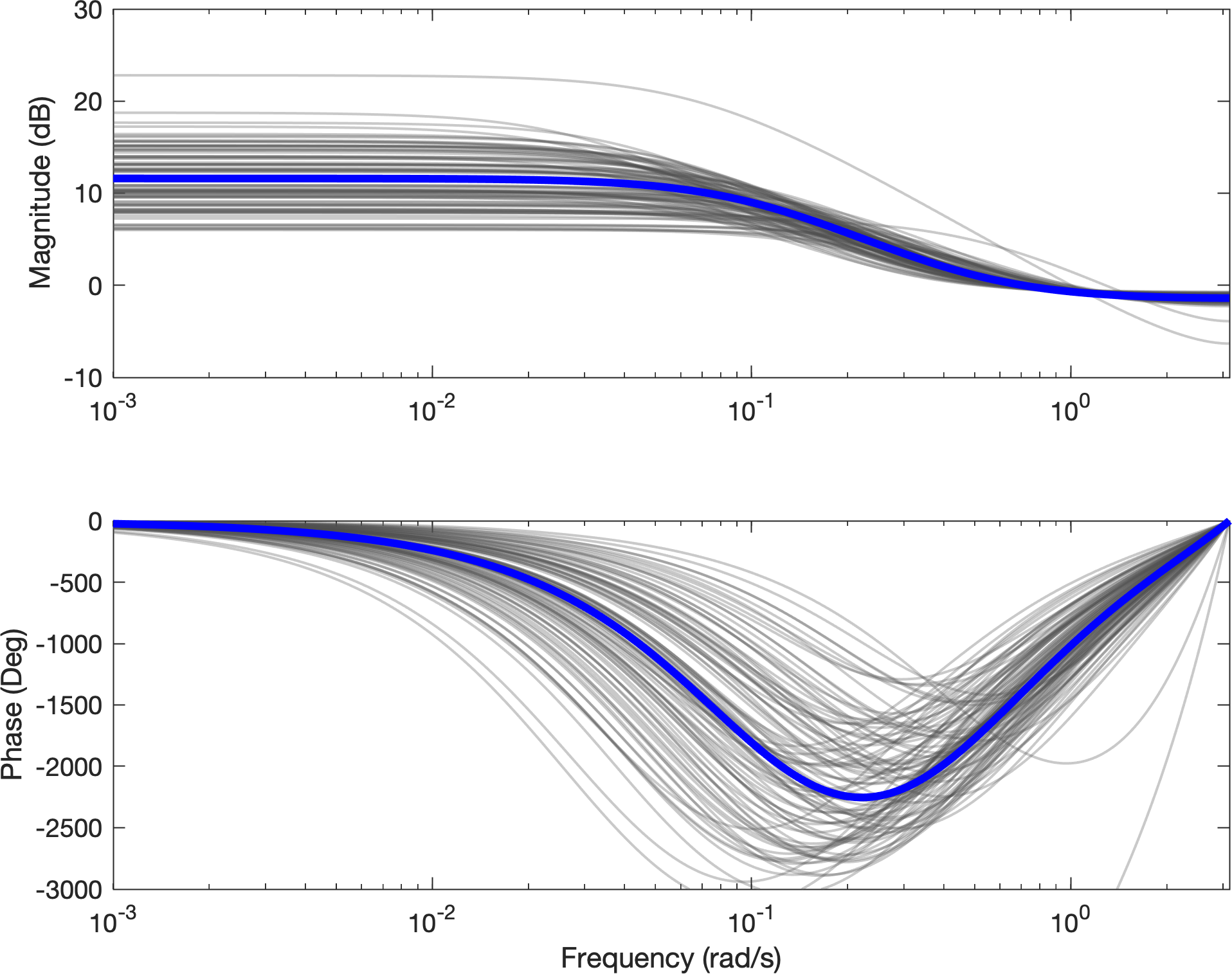}
       \caption{Standard quasi-Newton algorithm with no noise on
         function values or gradients.}
        \label{fig:lss_gn}
      \end{subfigure}
      \begin{subfigure}[t]{\linewidth}
        \includegraphics[width=\columnwidth]{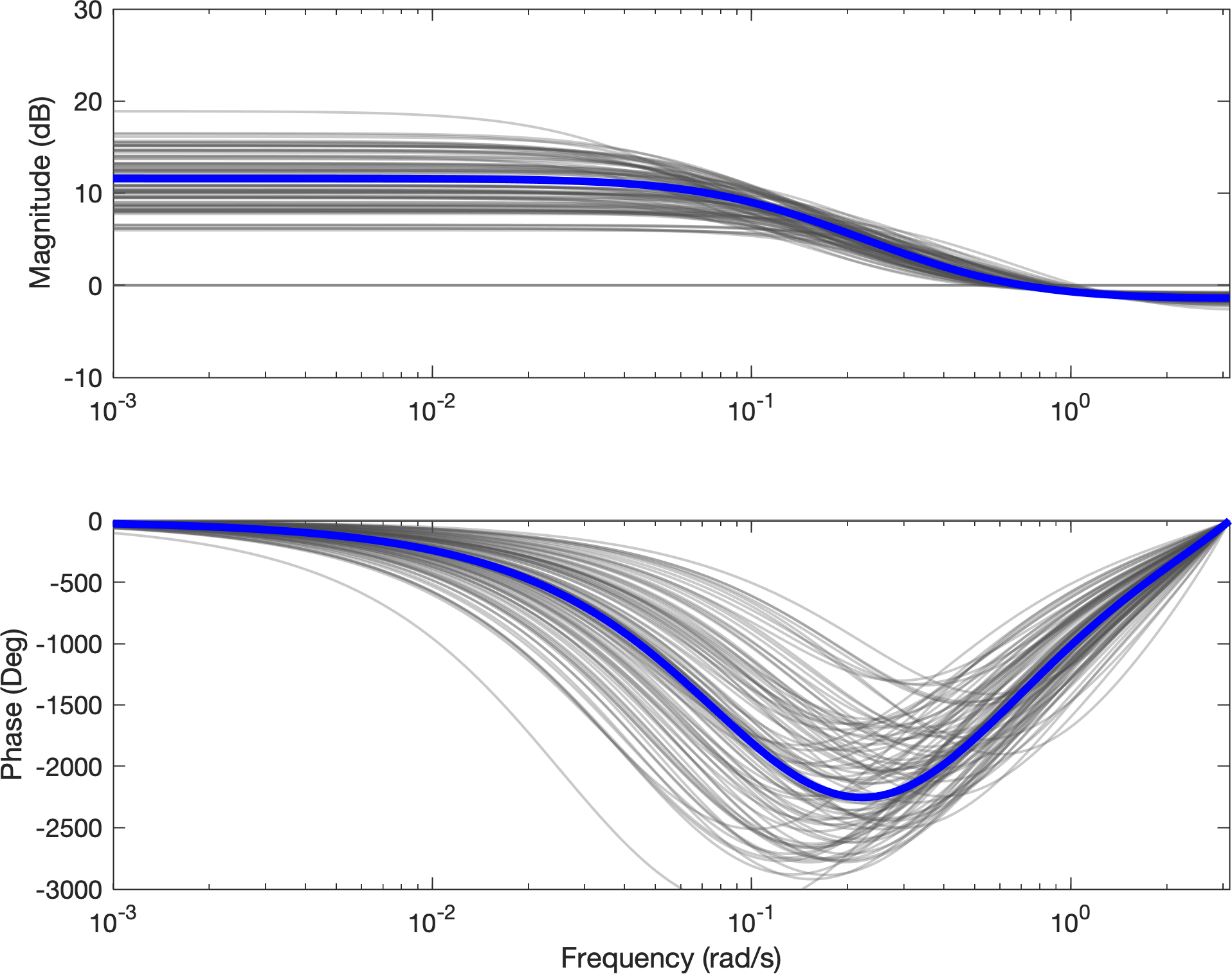}
       \caption{Quasi-Newton Gaussian process algorithm.}
        \label{fig:lss_qngp}
      \end{subfigure}
      \caption{Bode plots of estimated mean (light grey) and true
        (blue) systems for $100$ Monte Carlo runs.}
      \label{fig:bode_compare}
\end{figure}

\subsection{Numerical example -- nonlinear toy problem}
A commonly employed nonlinear benchmark problem involves the
following system
\begin{subequations}
  \label{eq:nlp}
  \begin{align}
    x_{t+1} &= ax_t + b\frac{x_t}{1 + x_t^2} + c\cos (1.2t) + v_t, \\
    y_t &= dx_t^2 + e_t,\\
  \begin{bmatrix}
    v_t \\ 
    e_t
  \end{bmatrix}
  &\sim \mathcal{N} \left (
    \begin{bmatrix}
      0 \\ 
      0
    \end{bmatrix}
    ,
    \begin{bmatrix}
      q & 0 \\ 
      0 & r
    \end{bmatrix}
  \right )
  \end{align}
\end{subequations}
where the true parameters are 
$\theta^\star = \left [ a^\star, b^\star, c^\star, d^\star, q^\star,
    r^\star \right ] = \left [ 0.5, 25, 8, 0.05, 0, 0.1 \right ].$
Here we repeat the simulation experiment from \cite{SchonWN:2011},
where a Monte Carlo study was performed using $100$
different data realisations $Y_N$ of length $N=100$.  For each of
these cases, an estimate $\widehat\theta$ was computed using $1000$
iterations of Algorithm~\ref{alg:SQN}. The algorithm was initialised
with the $i$'th element of $\theta_0$ chosen via $\theta_0(i) \sim
\mathcal{U}\left ( \frac{1}{2} \theta^\star(i),\ \frac{3}{2}
  \theta^\star(i) \right )$. In all cases
$M=50$ particles were used.

Figure~\ref{fig:nlss_toy} shows the parameter iterates (for
$a,b,c,d$). This shows all Monte Carlo runs (note that non were
trapped in a local minima). By way of comparison, the method presented
in this paper is compared with the EM approach from
\cite{SchonWN:2011} and the results are provided in
Table~\ref{tab:ex3}, where the values are the sample mean of the
parameter estimate across the Monte Carlo trials plus/minus the sample
standard deviation. For the EM approach, 8/100 simulations were
trapped in minima that were far from the global minimum and these
results have been removed from the calculations in Table~\ref{tab:ex3}.

\begin{figure}[!bth]
    \centering
    \includegraphics[width=\columnwidth]{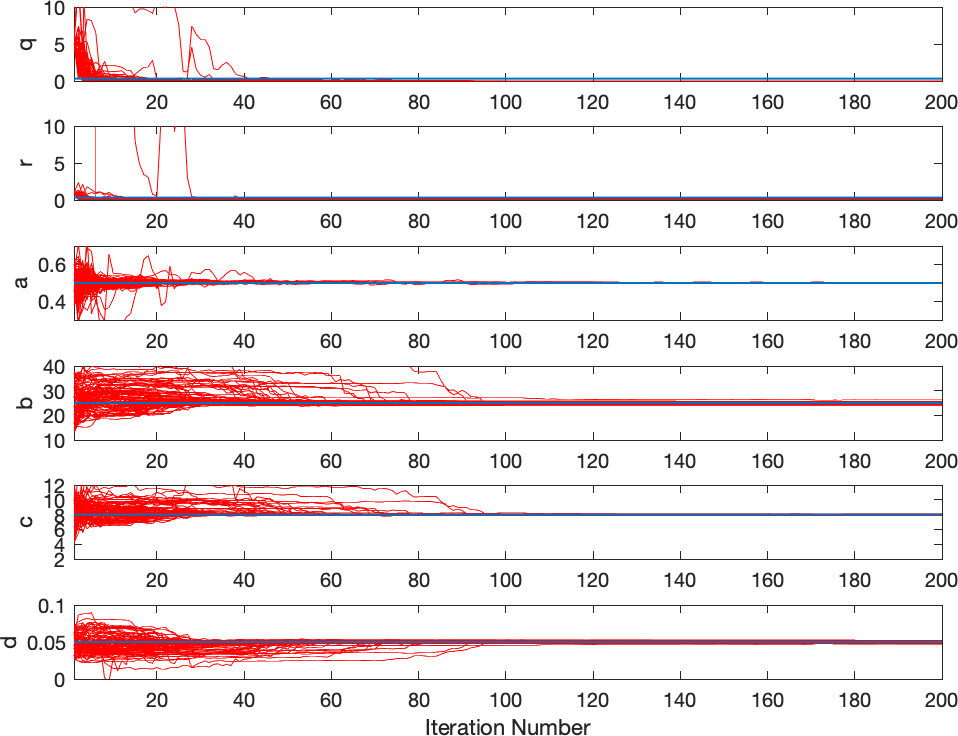}
    \caption{Parameter iterations for nonlinear benchmark
      problem. True value (solid blue line), parameter evolution
      (think red line per simulation).}
    \label{fig:nlss_toy}
\end{figure}

\begin{table}[htb]
  \centering 
  \caption{\em True and estimated parameter values for QNGP and PSEM algorithms; 
    mean value and standard deviations are shown for the estimates 
    based on $100$ Monte Carlo runs.}
  {\tiny
  \begin{tabular}{llll}
  	\toprule
    $\theta$ & $\theta^\star$ & \textbf{QNGP} & \textbf{PSEM}   \\
	\midrule
    $a$ & $0.5$  & $0.50 \pm 0.0011$  &  $0.50 \pm 0.0019$ \\ 
    $b$ & $25.0$ & $25.0 \pm 0.25$    & $25.0 \pm 0.99$  \\ 
    $c$ & $8.0$  & $8.0  \pm 0.03$    &  $7.99 \pm  0.13$ \\ 
    $d$ & $0.05$ & $0.05 \pm 0.0006$    & $0.05 \pm 0.0026$  \\
    $q$ & $0$    & $1.5 \times 10^{-6} \pm 1.1 \times 10^{-5}$    & $7.78 \times 10^{-5} \pm 7.6 \times 10^{-5}$  \\
    $r$ & $0.1$  & $0.095 \pm 0.015$    & $0.106 \pm 0.015$  \\
	\bottomrule
  \end{tabular}
}
  \label{tab:ex3}
\end{table}

\subsection{Numerical example -- interferometry}
Interferometry was recently made famous due to its use in Nobel award
winning detection of gravitational waves in 
2015~\cite{abbott2016observation}. Two 4km interferometers form
the Laser Interferometer Gravitational-Wave Observatory, called LIGO
for short. The measured intensity of each laser beam at time $t$, denoted here
as $y_{t,1}$ and $y_{t,2}$, can be plotted against one another to
reveal an elliptical shape as illustrated in
Figure~\ref{fig:nlss_inter}. This shape is essential to detecting
reflector perturbations, but is typically not known. The aim is to
estimate this ellipse based on observed data. We can parametrize this relationship as
\begin{subequations}
\begin{align}
  y_{t,1} &= \alpha_0 + \alpha_1\cos(\kappa p_t) + e_{t,1}, \\
  y_{t,2} &= \beta_0 + \beta_1 \sin(\kappa p_t + \gamma) + e_{t,2}, 
\end{align}
\end{subequations}
where $e_{t,1} \sim \N(0, \sigma^2), e_{t,2} \sim \N(0, \sigma^2)$ and $p_t$ denotes the position of the laser reflector, which is not directly
measured. A suitable model for the dynamics is provided by a
simple discrete-time kinematic relationship
\begin{align}
  \begin{pmatrix}
    p_{t+1}\\
    v_{t+1}
  \end{pmatrix} = \begin{pmatrix}
    1 & \Delta\\
    0 & 1
  \end{pmatrix} \begin{pmatrix}
    p_t\\
    v_t
  \end{pmatrix} + w_t,
\end{align}
where $\Delta>0$ is the sample interval and $w_t$ is an unknown action
on the system. The unknown parameters are collected as
$\theta = \begin{pmatrix} {\alpha_0} & {\alpha_1} & {\beta_0} &
  {\beta_1} & \gamma & \sigma \end{pmatrix}^{\Transp}$ and the
resulting maximum likelihood system identification problem can be
expressed as~\eqref{eq:App:MaxLik}. 
This problem construction is known as a blind Wiener
problem~\cite{WillsSLN:2013}. We generated $N=1000$ samples of the
outputs using the above model with true parameter values indicated in
Table~\ref{tab:ex4}. To calculate the likelihood and its gradient, we
employed a particle filter with $M=50$ particles. A Monte-Carlo
simulation with 100 runs was performed and the results are provided in
Table~\ref{tab:ex4}. Figure~\ref{fig:nlss_inter} also provides the
measured outputs for one simulation and plots the ellipse as the
algorithm progresses. The final ellipse is indicated in solid black
while the true ellipse is shown in red. Based on these results, it
appears that Algorithm~\ref{alg:SQN} appears to be performing quite
well .

\begin{figure}[!bth]
    \centering
    \includegraphics[width=\columnwidth]{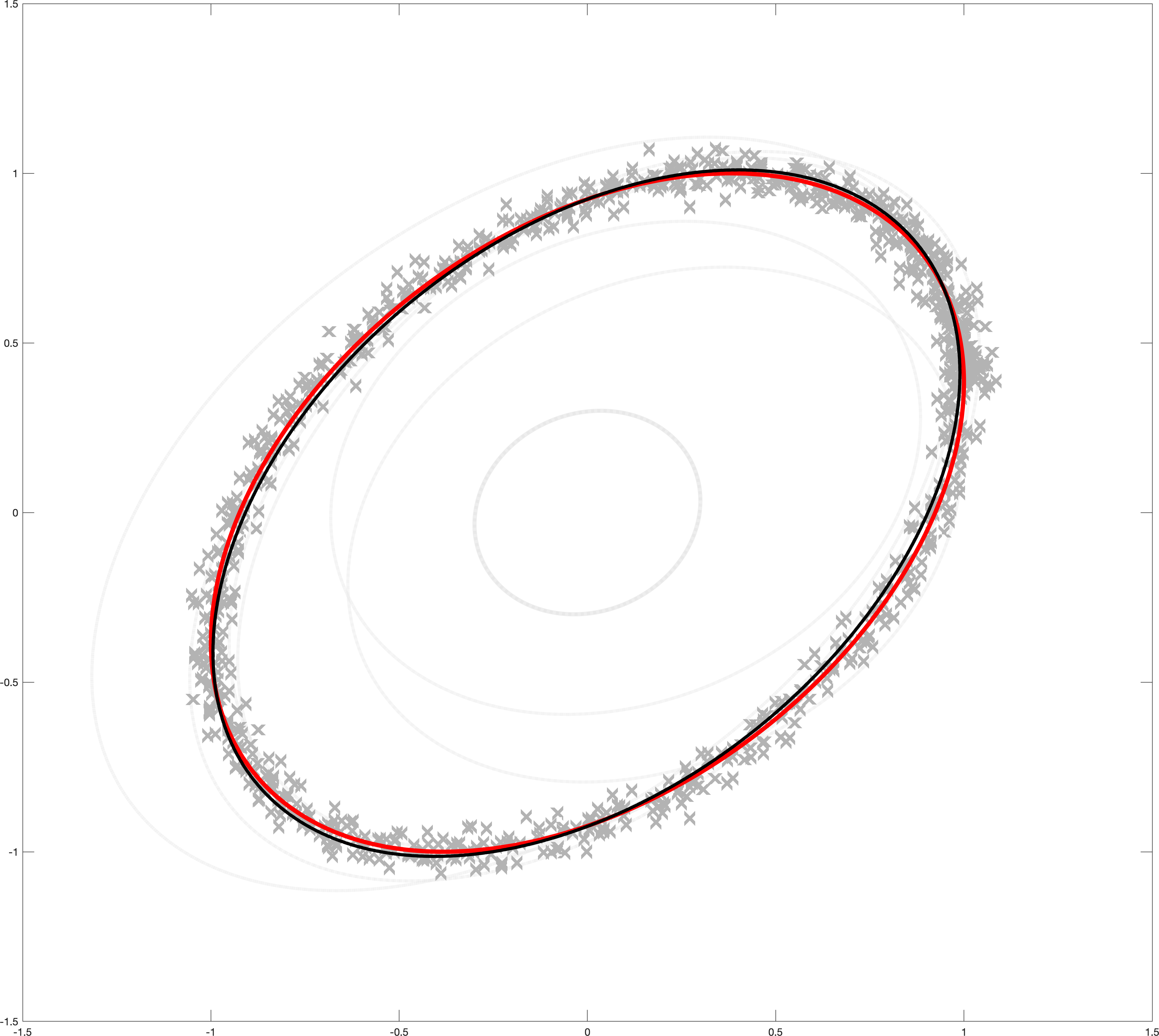}
    \caption{Parameter iterations for nonlinear interferometry
      problem. True ellipse (solid red), data points (crosses),
      intermediate ellipse estimates (light grey), final ellipse
      estimate (solid black).}
    \label{fig:nlss_inter}
\end{figure}

\begin{table}[htb]
  \centering 
  \caption{\em True and estimated parameter values for QNGP algorithm
    on the interferometry problem; mean value and standard deviations are shown for the estimates 
    based on $100$ Monte Carlo runs.}
  {\tiny
  \begin{tabular}{lll}
	\toprule
    $\theta$ & $\theta^\star$ & \textbf{QNGP Estimate} \\ 
	\midrule
    $\alpha_0$ & $0$  & $-0.0035 \pm 0.0026$ \\ 
    $\alpha_1$ & $1$ & $0.992 \pm 0.0034$    \\ 
    $\beta_0$ & $0$  & $-0.0001  \pm 0.0027$   \\ 
    $\beta_0$ & $1$ & $1.01 \pm 0.0042$   \\ 
    $\gamma$ & $0.4$    & $0.42 \pm 0.01$ \\ 
    $\sigma$ & $0.18$  & $0.14 \pm 0.0052$ \\ 
	\bottomrule
  \end{tabular}
}
\label{tab:ex4}
\end{table}

\subsection{MIMO Hammerstein-Wiener system}
\label{sec:Ex5}
As a further example, we turn attention to the 
multiple-input/multiple-output (MIMO) Hammerstein-Wiener system from
\cite{WillsSLN:2013} (Section 6.2 in that paper). The system
has two inputs, two outputs and 4'th order linear dynamics with
saturation and deadzone static nonlinearities on the input and output
channels. Full details of this system and the associated parametrization
can be obtained from \cite{WillsSLN:2013} and are not repeated here
due to page limitations.

For the purposes of estimation, $N=2000$ samples of the inputs and
outputs were simulated. In this case, two different algorithms
are compared:
\begin{enumerate}
\item The SMC based Expectation-Maximisation method developed in
  \cite{WillsSLN:2013}, called the PSEM (particle smoother EM) approach;
\item The quasi-Newton GP based solution presented in the current
  paper, denoted as QNGP.
\end{enumerate}
For QNGP method, $M=100$ particles were used, and for QNGP $M=2500$
particles were used (note that the computational complexity of the
former method scales as $O(M^2)$, while the second SMC approach scales
as $O(M)$, making the two comparable). The algorithms were terminated
after $1000$ iterations. The results of $100$ Monte Carlo runs for all
algorithms are shown in
Figures~\ref{fig:nlu1}--\ref{fig:bodemlem}. For each run, different
noise realisations were used according to the distributions specified
above. In each plot, the solid blue line indicates the true response,
while the red lines indicate the mean (think dashed) and one standard
deviation from mean (thin dashed).

This example shows that the proposed Algorithm~\ref{alg:SQN} compares
well to state-of-the-art methods such as PSEM. A potential benefit of
the QNGP approach is that only forward filtering is required, whereas
PSEM requires a SMC-based smoother, which can be challenging in
general\cite{LindstenS:2013,Kantas:2015}.

\begin{figure}[bth]
  \centering
  \begin{subfigure}{0.45\columnwidth}
    \includegraphics[width=\linewidth]{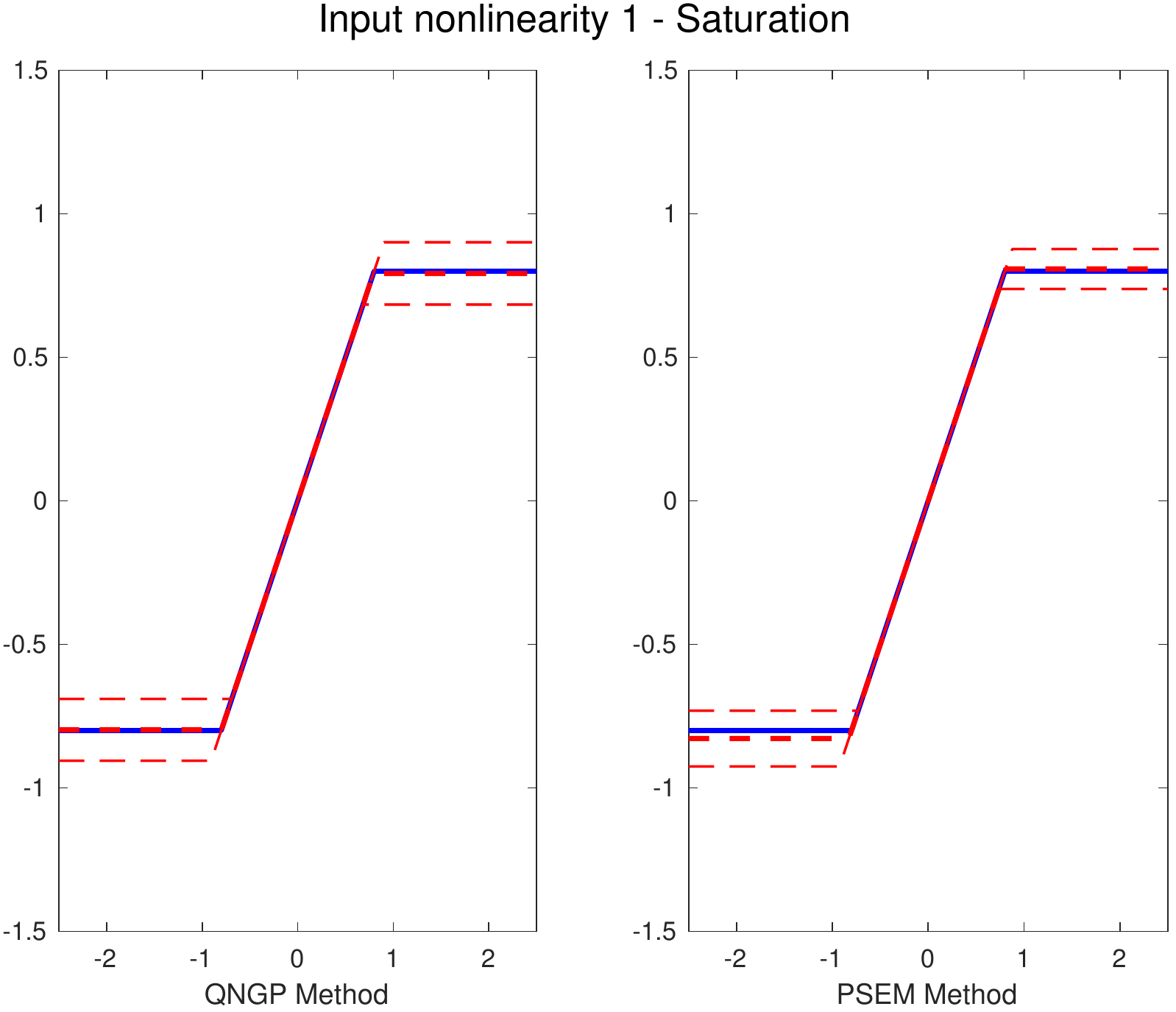}%
  \caption{Input nonlinearity-1.}
  \label{fig:nlu1}
  \end{subfigure}
  ~
  \begin{subfigure}{0.45\columnwidth }
    \includegraphics[width=\linewidth]{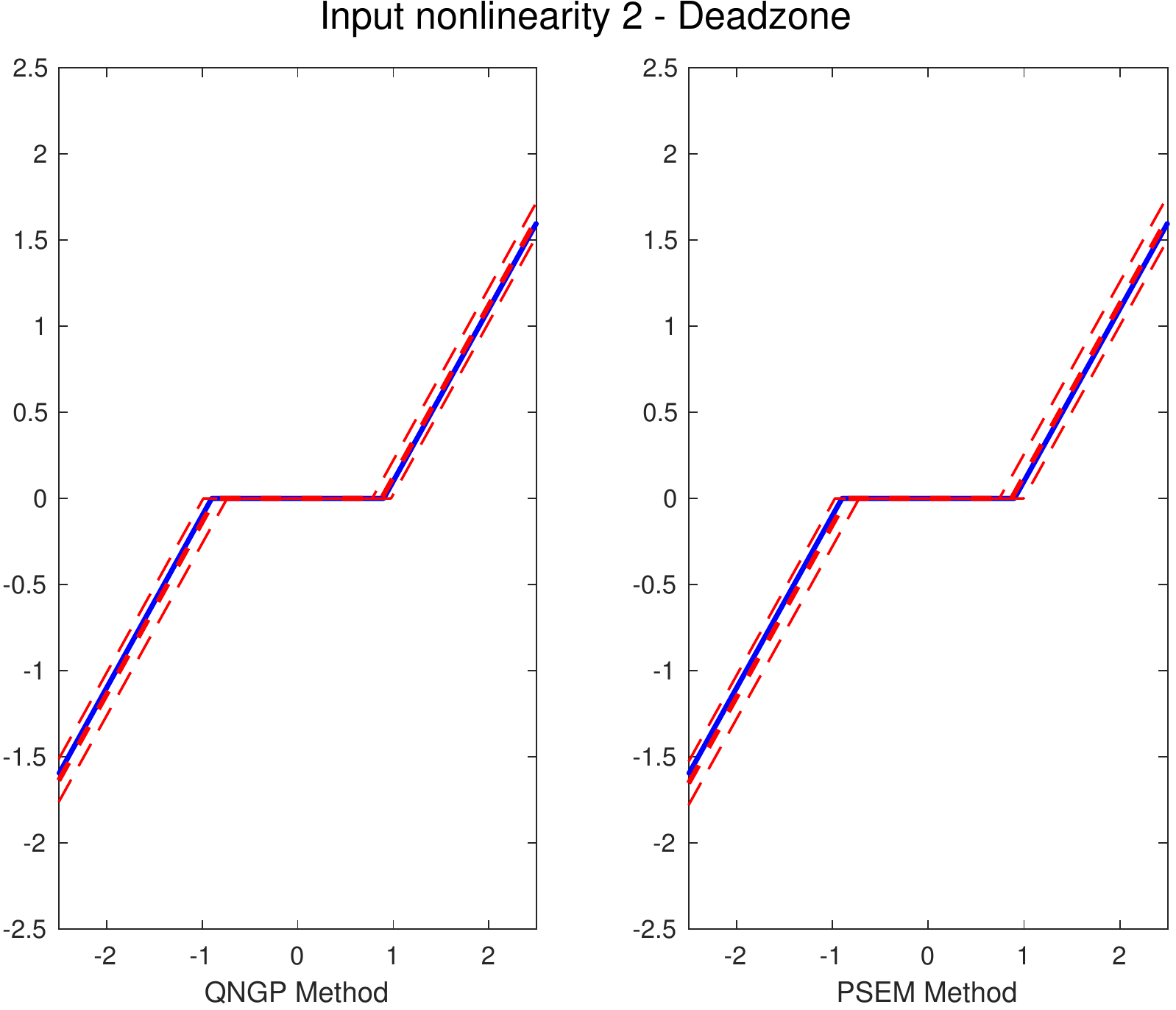}%
    \caption{Input nonlinearity-2.}
  \label{fig:nlu2}
  \end{subfigure}
  ~
  \begin{subfigure}{0.45\columnwidth}
  \includegraphics[width=\linewidth]{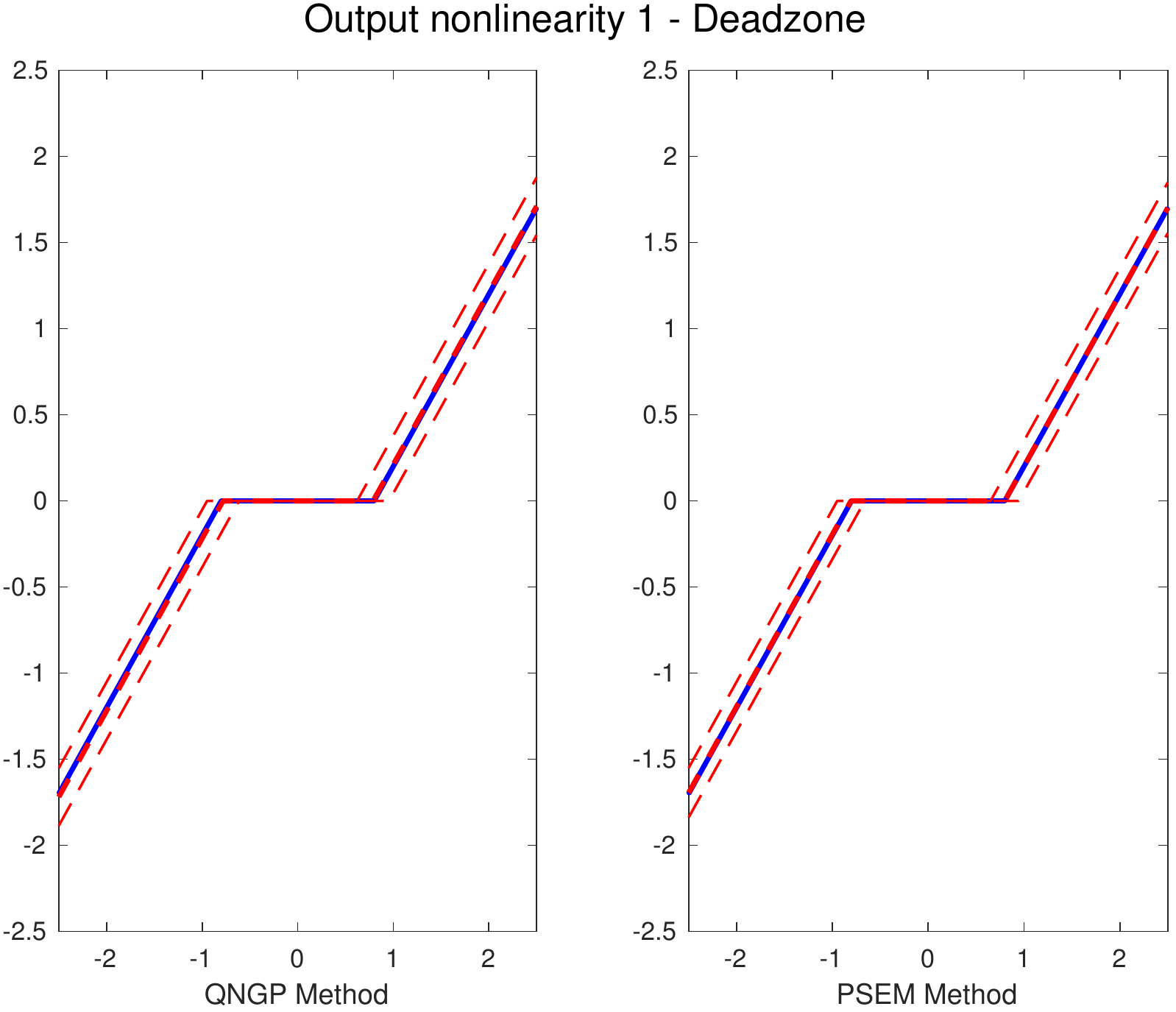}%
  \caption{Output nonlinearity-1.}
  \label{fig:nly1}
  \end{subfigure}
  ~
  \begin{subfigure}{0.45\columnwidth}
  \includegraphics[width=\linewidth]{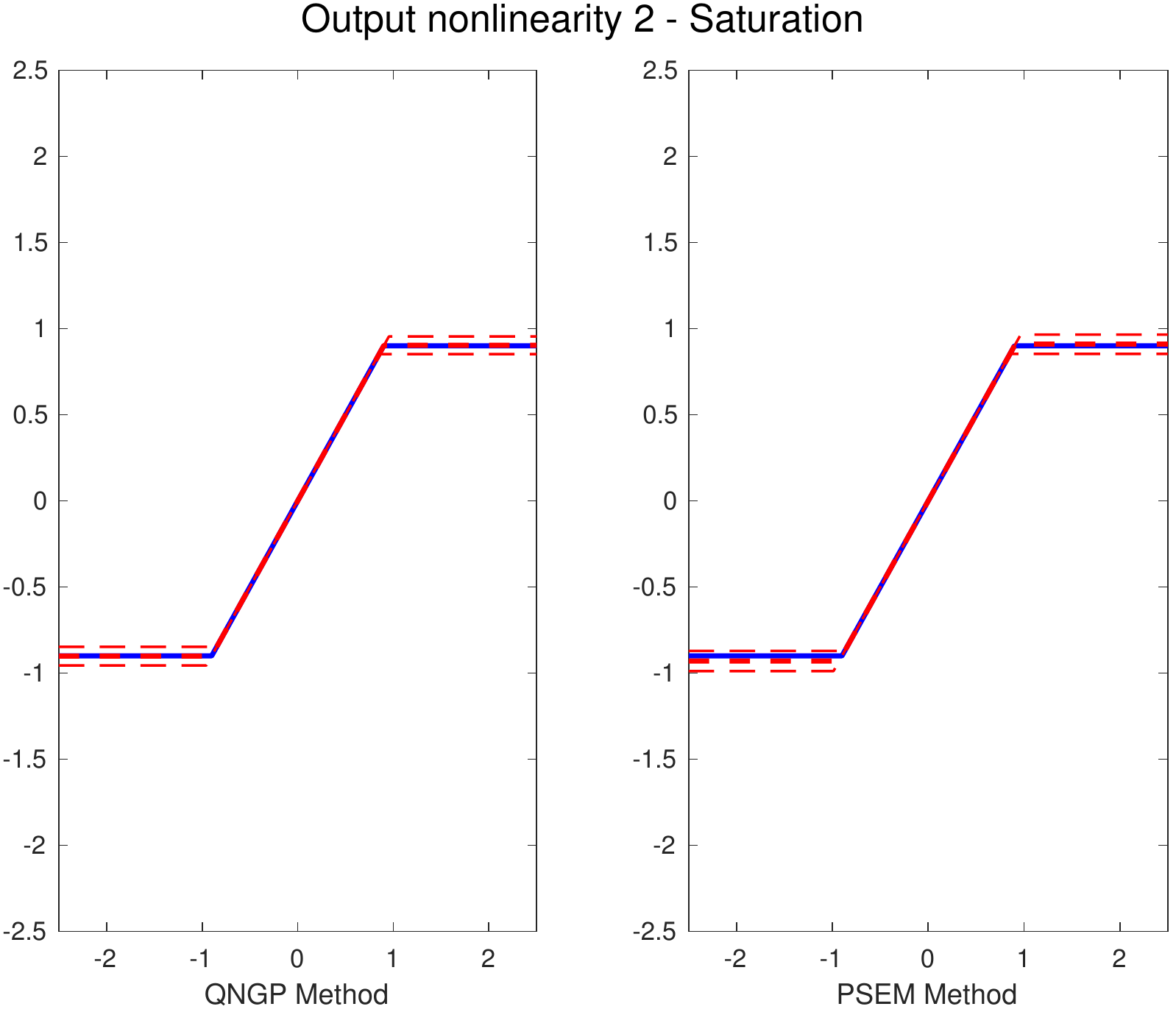}%
  \caption{Output nonlinearity-2.}
  \label{fig:nly2}
  \end{subfigure}
\end{figure} 




\begin{figure}[bth]
\begin{center}
  \includegraphics[width=0.95\columnwidth]{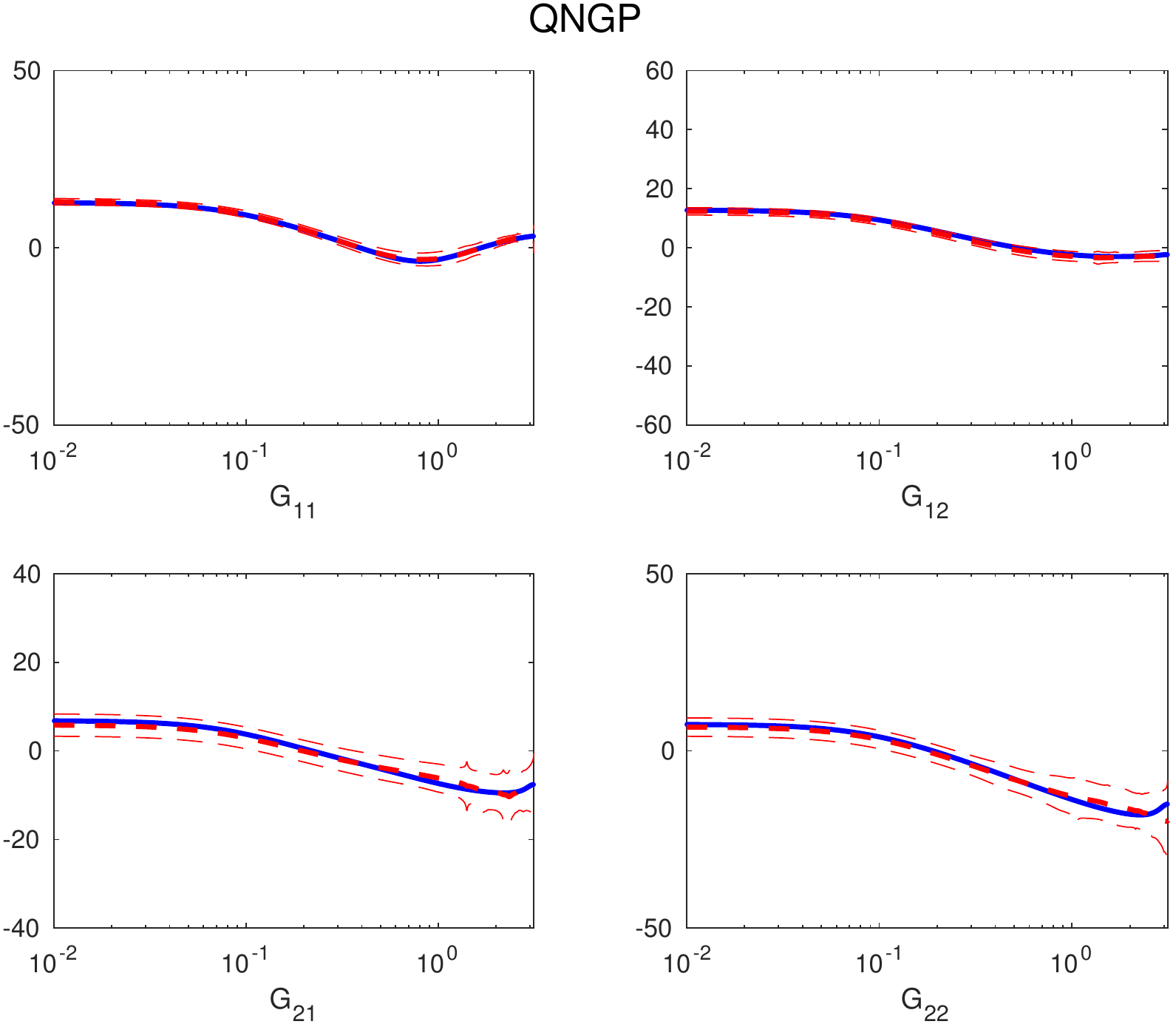}%
  \caption{Bode magnitude response using the QNGP method for
    the example studied in Section~\ref{sec:Ex5}. }
  \label{fig:bodeoe}
\end{center}
\end{figure} 

\begin{figure}[bth]
\begin{center}
  \includegraphics[width=0.95\columnwidth]{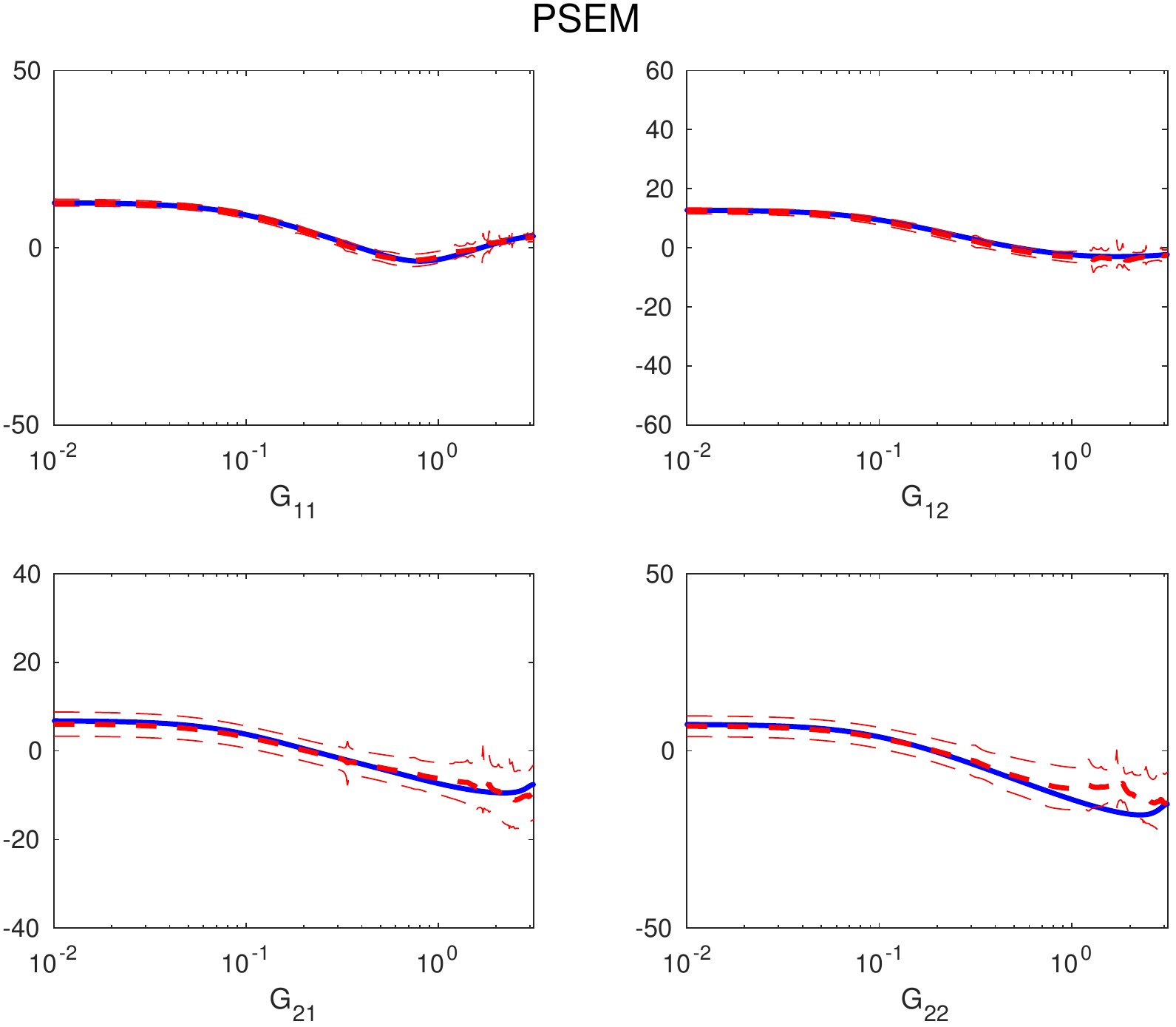}%
  \caption{Bode magnitude response using the PSEM method for
    the example studied in Section~\ref{sec:Ex5}.}
  \label{fig:bodemlem}
\end{center}
\end{figure}


\section{Conclusion and future work} \label{sec:conclusions}
In this paper we have presented a new quasi-Newton algorithm for
stochastic optimisation problems. The approach uses a tailored Gaussian process 
to model the unknown Hessian, which is then learned from gradient 
information as the algorithm progresses. To regulate the iterates, we
developed a stochastic line-search procedure that satisfies an Armijo condition (in
expectation) for early iterations, and converges to a deterministic
step-length schedule as the iterations grow. The former provides a mechanism
to handle poorly scaled problems while the Hessian is learned, and the
latter mimics conditions required for convergence in the stochastic
setting. The resulting combination is demonstrated on several
challenging nonlinear system identification problems, with promising
results. The method can be straightforwardly extended to handle the
situation where---possibly noisy---Hessian matrices are also observed.

The GP construction assumes that the gradient is corrupted by additive
Gaussian noise, and this is potentially not suitable for certain
problem classes, particularly where the noise distribution has heavy
tails. This allows for an extension to other processes such as the
student-t process \cite{ShahWG:2014}. A further extension is for the
case of large-scale problems, which frequently occur within the Machine
Learning field. Some initial work in this direction can be found in \cite{WillsS:2018}.

\section*{Acknowledgements}
This research is financially supported by the Swedish Research Council via the projects 
\emph{NewLEADS - New Directions in Learning Dynamical Systems} (contract number: 621-2016-06079) 
and
\emph{Learning flexible models for nonlinear dynamics} (contract number: 2017-03807), 
and 
the Swedish Foundation for Strategic Research (SSF) via the project \emph{ASSEMBLE} (contract number: RIT15-0012).


\begin{thebibliography}{10}

\bibitem{abbott2016observation}
Benjamin~P Abbott, Richard Abbott, TD~Abbott, MR~Abernathy, Fausto Acernese,
  Kendall Ackley, Carl Adams, Thomas Adams, Paolo Addesso, RX~Adhikari, et~al.
\newblock Observation of gravitational waves from a binary black hole merger.
\newblock {\em Physical review letters}, 116(6):061102, 2016.

\bibitem{Armijo:1966}
L.~Armijo.
\newblock Minimization of functions having {L}ipschitz continuous first partial
  derivatives.
\newblock {\em Pacific Journal of Mathematics}, 16(1):1--3, 1966.

\bibitem{AsiD20129siam}
H.~Asi and J.~C. Duchi.
\newblock Stochastic (approximate) proximal point methods: convergence,
  optimality, and adaptivity.
\newblock {\em SIAM Journal on Optimization}, To appear, 2019.

\bibitem{bertsekas1996neuro}
Dimitri~P Bertsekas and John~N Tsitsiklis.
\newblock {\em Neuro-dynamic programming}, volume~5.
\newblock Athena Scientific Belmont, MA, 1996.

\bibitem{BollapragadaMNST:2018}
R.~Bollapragada, D.~Mudigere, J.~Nocedal, H.-J.~M. Shi, and P.~T.~P. Tang.
\newblock A progressive batching {L}-{BFGS} method for machine learning.
\newblock In {\em Proceedings of the 35th International Conference on Machine
  Learning (ICML)}, Stockholm, Sweden, 2018.

\bibitem{BordesBG:2009}
A.~Bordes, L.~Bottou, and P.~Gallinari.
\newblock {SGD-QN}: Careful quasi-{N}ewton stochastic gradient descent.
\newblock {\em Journal of Machine Learning Research (JMLR)}, 10:1737--1754,
  2009.

\bibitem{BottouCN:2018}
L.~Bottou, F.~E. Curtis, and J.~Nocedal.
\newblock Optimization methods for large-scale machine learning.
\newblock {\em SIAM Review}, 60(2):223--311, 2018.

\bibitem{Broyden:1965}
C.~G. Broyden.
\newblock A class of methods for solving nonlinear simultaneous equations.
\newblock {\em Mathematics of Computation}, 19(92):577--593, 1965.

\bibitem{Broyden:1967}
C.~G. Broyden.
\newblock Quasi-{N}ewton methods and their application to function
  minimization.
\newblock {\em Mathematics of Computation}, 21:368--381, 1967.

\bibitem{Broyden:1970}
C.~G. Broyden.
\newblock The convergence of a class of double-rank minimization algorithms.
\newblock {\em Journal of the Institute of Mathematics and Its Applications},
  6(1):76--90, 1970.

\bibitem{Delmoral:2004}
P.~Del~Moral.
\newblock {\em Feynman-Kac formulae: Genealogical and Interacting Particle
  Systems with Applications}.
\newblock Springer, New York, USA, 2004.

\bibitem{DuchiEY:2011}
J.~Duchi, E.~Hazan, and Y.~Singer.
\newblock Adaptive subgradient methods for online learning and stochastic
  optimization.
\newblock {\em Journal of Machine Learning Research (JMLR)}, 12:2121--2159,
  2011.

\bibitem{Fletcher:1970}
R.~Fletcher.
\newblock A new approach to variable metric algorithms.
\newblock {\em The computer journal}, 13(3):317--322, 1970.

\bibitem{Fletcher:1987}
R.~Fletcher.
\newblock {\em Practical methods of optimization}.
\newblock John Wiley \& Sons, Chichester, UK, second edition, 1987.

\bibitem{FletcherP:1963}
R.~Fletcher and M.~J.~D. Powell.
\newblock A rapidly convergent descent method for minimization.
\newblock {\em The computer journal}, 6(2):163--168, 1963.

\bibitem{Goldfarb:1970}
D.~Goldfarb.
\newblock A family of variable metric updates derived by variational means.
\newblock {\em Mathematics of Computation}, 24(109):23--26, 1970.

\bibitem{goodwin1981discrete}
Graham~C Goodwin, Peter~J Ramadge, and Peter~E Caines.
\newblock Discrete time stochastic adaptive control.
\newblock {\em SIAM Journal on Control and Optimization}, 19(6):829--853, 1981.

\bibitem{Gordon:1993}
N.~J. Gordon, D.~J. Salmond, and A.~F.~M. Smith.
\newblock Novel approach to nonlinear/non-{G}aussian {B}ayesian state
  estimation.
\newblock In {\em IEE Proceedings on Radar and Signal Processing}, volume 140,
  pages 107--113, 1993.

\bibitem{HendriksJWS:2018}
J.~N. Hendriks, C.~Jidling, A.~Wills, and T.~B. Sch\"{o}n.
\newblock Evaluating the squared-exponential covariance function in {G}aussian
  processes with integral observations.
\newblock Technical report, arXiv:1812.07319, 2018.

\bibitem{Hennig:2015}
P.~Hennig.
\newblock Probabilistic interpretation of linear solvers.
\newblock {\em {SIAM} Journal on Optimization}, 25(1):234--260, 2015.

\bibitem{HennigK:2013}
P.~Hennig and M.~Kiefel.
\newblock Quasi-{N}ewton methods: a new direction.
\newblock {\em Journal of Machine Learning Research (JMLR)}, 14:843--865, 2013.

\bibitem{Kantas:2015}
N.~Kantas, A.~Doucet, S.~S. Singh, J.~M. Maciejowski, and N.~Chopin.
\newblock On particle methods for parameter estimation in state-space models.
\newblock {\em Statistical Science}, 30(3):328--351, 2015.

\bibitem{kiefer1952stochastic}
Jack Kiefer and Jacob Wolfowitz.
\newblock Stochastic estimation of the maximum of a regression function.
\newblock {\em The Annals of Mathematical Statistics}, 23(3):462--466, 1952.

\bibitem{KingmaB:2015}
D.~P. Kingma and J.~Ba.
\newblock Adam: a method for stochastic optimization.
\newblock In {\em Proceedings of the 3rd international conference on learning
  representations ({ICLR})}, San Diego, CA, USA, 2015.

\bibitem{Kitagawa:1993}
G.~Kitagawa.
\newblock A {M}onte {C}arlo filtering and smoothing method for non-{G}aussian
  nonlinear state space models.
\newblock In {\em Proceedings of the 2nd US-Japan joint Seminar on Statistical
  Time Series Analysis}, pages 110--131, 1993.

\bibitem{LindstenS:2013}
F.~Lindsten and T.~B. Sch\"{o}n.
\newblock Backward simulation methods for {M}onte {C}arlo statistical
  inference.
\newblock {\em Foundations and {T}rends in {M}achine {L}earning}, 6(1):1--143,
  2013.

\bibitem{Ljung:1979}
L.~Ljung.
\newblock Asymptotic behavior of the extended {K}alman filter as a parameter
  estimator for linear systems.
\newblock {\em IEEE Transactions on Automatic Control}, AC--24(1):36--50,
  February 1979.

\bibitem{Ljung:1983}
L.~Ljung and T.~S{\"o}derstr{\"o}m.
\newblock {\em Theory and Practice of Recursive Identification}.
\newblock The MIT Press series in Signal Processing, Optimization, and Control.
  The {MIT} {P}ress, Cambridge, Massachusetts, 1983.

\bibitem{ljung1977analysis}
Lennart Ljung.
\newblock Analysis of recursive stochastic algorithms.
\newblock {\em IEEE transactions on automatic control}, 22(4):551--575, 1977.

\bibitem{ljung1978strong}
Lennart Ljung.
\newblock Strong convergence of a stochastic approximation algorithm.
\newblock {\em The Annals of Statistics}, pages 680--696, 1978.

\bibitem{ljung2012stochastic}
Lennart Ljung, Georg Pflug, and Harro Walk.
\newblock {\em Stochastic approximation and optimization of random systems},
  volume~17.
\newblock Birkh{\"a}user, 2012.

\bibitem{luo2018adaptive}
Liangchen Luo, Yuanhao Xiong, and Yan Liu.
\newblock Adaptive gradient methods with dynamic bound of learning rate.
\newblock In {\em International Conference on Learning Representations
  ({ICLR})}, New Orleans, LA, USA, 2019.

\bibitem{MagnusN:1980}
J.~R. Magnus and H.~Neudecker.
\newblock The elimination matrix: some lemmas and applications.
\newblock {\em SIAM Journal on Algebraic Discrete Methods}, 1(4):422--449,
  1980.

\bibitem{MahsereciH:2017}
M.~Mahsereci and P.~Hennig.
\newblock Probabilistic line searches for stochastic optimization.
\newblock {\em Journal of Machine Learning Research (JMLR)}, 18(119):1--59,
  2017.

\bibitem{MalikP:2011}
S.~Malik and M.~K. Pitt.
\newblock Particle filters for continuous likelihood evaluation and
  maximisation.
\newblock {\em Journal of Econometrics}, 165(2):190--209, 2011.

\bibitem{MokhtariR:2014}
A.~Mokhtari and A.~Ribeiro.
\newblock {RES}: regularized stochastic {BFGS} algorithm.
\newblock {\em IEEE Transactions on Signal Processing}, 62(23):6089--6104,
  2014.

\bibitem{MoulinesB:2011}
E.~Moulines and F.~Bach.
\newblock Non-asymptotic analysis of stochastic approximation algorithms for
  machine learning.
\newblock In {\em Advances in Neural Information Processing Systems ({NIPS})},
  Granada, Spain, 2011.

\bibitem{NocedalW:2006}
J.~Nocedal and S.~J. Wright.
\newblock {\em Numerical Optimization}.
\newblock Springer Series in Operations Research. Springer, New York, USA,
  second edition, 2006.

\bibitem{PittSGK:2012}
M.~K. Pitt, R.~dos Santos~Silva, R.~Giordani, and R.~Kohn.
\newblock On some properties of {M}arkov chain {M}onte {C}arlo simulation
  methods based on the particle filter.
\newblock {\em Journal of Econometrics}, 171(2):134--151, 2012.

\bibitem{PoyiadjisDS:2011}
G.~Poyiadjis, A.~Doucet, and S.S. Singh.
\newblock Particle approximations of the score and observed information matrix
  in state space models with application to parameter estimation.
\newblock {\em Biometrika}, 98(1):65--80, 2011.

\bibitem{RasmussenW:2006}
C.~E. Rasmussen and C.~K.~I. Williams.
\newblock {\em Gaussian processes for machine learning}.
\newblock MIT Press, 2006.

\bibitem{RobbinsM:1951}
H.~Robbins and S.~Monro.
\newblock A stochastic approximation method.
\newblock {\em Annals of Mathematical Statistics}, 22(3):400--407, 1951.

\bibitem{SchonLDWNSD:2015}
T.~B. Sch\"{o}n, F.~Lindsten, J.~Dahlin, J.~W{\aa}gberg, A.~C. Naesseth,
  A.~Svensson, and L.~Dai.
\newblock Sequential {M}onte {C}arlo methods for system identification.
\newblock In {\em Proceedings of the 17th IFAC Symposium on System
  Identification (SYSID)}, Beijing, China, October 2015.

\bibitem{SchonWN:2011}
T.~B. Sch\"{o}n, A.~Wills, and B.~Ninness.
\newblock System identification of nonlinear state-space models.
\newblock {\em Automatica}, 47(1):39--49, January 2011.

\bibitem{SchraudolphYG:2007}
N.~N. Schraudolph, J.~Yu, and S.~G\"unter.
\newblock A stochastic quasi-{N}ewton method for online convex optimization.
\newblock In {\em Proceedings of the 11th international conference on
  Artificial Intelligence and Statistics ({AISTATS})}, 2007.

\bibitem{ShahWG:2014}
A.~Shah, A.~G. Wilson, and Z.~Ghahramani.
\newblock Student-t processs as alternatives to {G}aussian processes.
\newblock In {\em Proceedings of the 17th international conference on
  artificial intelligence and statistics ({AISTATS})}, Reykjavik, Iceland, May
  2014.

\bibitem{Shanno:1970}
D.~F. Shanno.
\newblock Conditioning of quasi-{N}ewton methods for function minimization.
\newblock {\em Mathematics of Computation}, 24(111):647--656, 1970.

\bibitem{spall2005introduction}
James~C Spall.
\newblock {\em Introduction to stochastic search and optimization: estimation,
  simulation, and control}, volume~65.
\newblock John Wiley \& Sons, 2005.

\bibitem{StewartM:1992}
L.~Stewart and P.~McCarty.
\newblock The use of {B}ayesian belief networks to fuse continuous and discrete
  information for target recognition and discrete information for target
  recognition, tracking, and situation assessment.
\newblock In {\em Proceedings of SPIE Signal Processing, Sensor Fusion and
  Target Recognition}, volume 1699, pages 177--185, 1992.

\bibitem{WillsS:2018}
A.~Wills and T.~B. Sch\"{o}n.
\newblock Stochastic quasi-{N}ewton with adaptive step lengths for large-scale
  problems.
\newblock arXiv:1802.04310, 2018.

\bibitem{WillsSLN:2013}
A.~Wills, T.~B. Sch\"{o}n, L.~Ljung, and B.~Ninness.
\newblock Identification of {H}ammerstein-{W}iener models.
\newblock {\em Automatica}, 49(1):70--81, 2013.

\bibitem{WillsS:2017}
A.~G. Wills and T.~B. Sch\"{o}n.
\newblock On the construction of probabilistic {N}ewton-type algorithms.
\newblock In {\em Proceedings of the 56th {IEEE} Conference on Decision and
  Control ({CDC})}, Melbourne, Australia, December 2017.

\bibitem{Wolfe:1969}
P.~Wolfe.
\newblock Convergence conditions for ascent methods.
\newblock {\em SIAM Review}, 11(2):226--235, 1969.

\bibitem{Wolfe:1971}
P.~Wolfe.
\newblock Convergence conditions for ascent methods {II}: some corrections.
\newblock {\em SIAM Review}, 13(2):185--188, 1971.

\end{thebibliography}

\appendix
\section{The stochastic Armijo condition}
\label{app:StochWolfeCond1}

%
%
All expectation below are conditioned on the variables $\{x_k, s_{\ell_{k-2}},
\widehat{y}_{\ell_{k-2}}\}$. This conditioning is dropped from the 
notation in order to improve readability. The required expectation is 
\begin{align}
  \label{eq:AppSWexpectation2}
  \Exp{\widehat{f}(x_k + \alpha_k {p}_k) - \widehat{f}(x_k) -
            c\alpha_k {g}_k^{\Transp} {p}_k} \leq 0,
\end{align}
and recall that 
\begin{subequations}
\label{eq:511}
\begin{align}
  \widehat{f}(x_k) &= f(x_k) + e_k,\label{eq:2}\\
  g_k &= \nabla f(x_k) + v_k,\label{eq:3}\\
  p_k &= -B_k g_k\\
                   &= -B_k (\nabla f(x_k) + v_k).\label{eq:4}
\end{align}
\end{subequations}
Assume that $x_k \in \mathcal{X}$ and $x_k+\alpha_k p_k \in
\mathcal{X}$. Since $f$ is assumed twice continuously differentiable
on $\mathcal{X}$, then employing Taylor's theorem results in
\begin{align}
  \notag
  f(x_k + \alpha_k p_k) &= f(x_k) + \alpha_k p_k^\Transp \nabla f(x_k)\\
  \label{eq:6} 
	&+ \frac{\alpha_k^2}{2} p_k^\Transp
                          \nabla^2 f(\bar{x}_k) p_k 
\end{align}
for some $\bar{x}_k \in \mathcal{X}$. Hence, using (\ref{eq:5}) and
(\ref{eq:6}) then (\ref{eq:AppSWexpectation2}) becomes
\begin{align*}
\textnormal{E} \Bigl [ &{f}(x_k) - f(x_k) + e_{k+1} - e_k\\ 
&- \alpha_k (\nabla f(x_k) + v_k)^\Transp B_k\nabla f(x_k)\\
&+c\alpha_k (\nabla f(x_k) + v_k)^{\Transp} B_k (\nabla f(x_k) + v_k) \\
&+ \frac{\alpha_k^2}{2} p_k^\Transp
                          \nabla^2 f(\bar{x}_k) p_k
\Bigr ] \leq 0
\end{align*}
which reduces to
\begin{align*}
  &\alpha_k \Bigl ( (1-c)\nabla f^\Transp (x_k) B_k\nabla f(x_k) - c \Tr \{B_kR\}
  \Bigr )\\
&\qquad\leq - \textnormal{E} \Bigl [ \frac{\alpha_k^2}{2} p_k^\Transp
                          \nabla^2 f(\bar{x}_k) p_k \Bigr ].
\end{align*}
Let $\gamma=\nabla f^\Transp (x_k) B_k\nabla f(x_k)$ and $\beta=\Tr
\{B_kR\}$, then division on both sides by $\alpha_k > 0$ results in
\begin{align}
  \label{eq:1}
  \gamma - c (\gamma + \beta) \leq -\alpha_k \kappa,
\end{align}
where
$\kappa = \textnormal{E} [ 1/2 p_k^\Transp \nabla^2 f(\bar{x}_k)
p_k]$. Since both $B_k \succ 0$ and $R \succ 0$ then $\gamma > 0$, and
$\beta > 0$. Therefore, for $\alpha_k>0$ small enough, then
\eqref{eq:1} is satisfied for
\begin{align*}
  0 < c < \bar{c} = \frac{\gamma}{\gamma + \beta}.
\end{align*}

\end{document}